\title{Intersection of paraboloids and application to Minkowski-type problems}
\author{Pedro Machado Manh\~{a}es de Castro}
\address{Centro de Inform\'{a}tica, Universidade Federal de Pernambuco, Brazil.}
\author{Quentin M\'erigot}
\address{Laboratoire Jean Kuntzmann, CNRS/Universit\'{e} de Grenoble, France}
\author{Boris Thibert}
\address{Laboratoire Jean Kuntzmann, Universit\'{e} de Grenoble, France}
\begin{document}

\begin{abstract}
In this article, we study  the
intersection (or union) of the convex hull of N confocal
paraboloids (or ellipsoids) of revolution. This study is motivated by
a Minkowski-type problem arising in geometric optics. We show that in
each of the four cases, the combinatorics is given by the intersection
of a power diagram with the unit sphere. We prove the complexity is
$O(N)$ for the intersection of paraboloids and $\Omega(N^2)$ for the
intersection and the union of ellipsoids. We provide an algorithm to
compute these intersections using the exact geometric computation
paradigm. This algorithm is optimal in the case of the intersection of
ellipsoids and is used to solve numerically the far-field reflector
problem.
\end{abstract}

\maketitle

\section*{Introduction}

The computation of intersection of half-spaces is a well-studied
problem in computational geometry, which by duality is equivalent to
the computation of a convex hull.  Similarly, the computation of
intersections or unions of spheres is also well studied and can be
done by using power diagrams \cite{aurenhammer:1987}.  In this
article, we study the computation and the complexity of the
intersection of the convex hull of confocal paraboloids of revolution,
showing that it is equivalent to intersecting a certain power diagram
with the unit sphere.  Union of convex hull of confocal paraboloids of
revolution, and intersection or union of convex hull of confocal
ellipsoids of revolution can be studied using the same tools.  These
studies are motivated by inverse problems similar to Minkowski problem
that arise in geometric optics. We show how the algorithm we developed
to compute the intersection of paraboloids are used to solve large
instances of one of these problems.

\subsection*{Minkowski-type problems.}
A theorem of Minkowski asserts that given a family of directions
$(y_i)_{1\leq i\leq N}$ and a family of non-negative numbers
$(\alpha_i)_{1\leq i\leq N}$, one can construct a convex polytope with
exactly $N$ facets, such that the $i$th facet has exterior normal
$y_i$ and area $\alpha_i$. Aurenhammer, Hoffman and Aronov
\cite{aurenhammer1998minkowski} studied a variant of this problem
involving power diagrams and showed its equivalence with the so-called
constrained least-square matching problem.  This article is motivated
by yet another problem of Minkowski-type that arises in geometric
optics, which is called the far-field reflector problem in the
literature \cite{caffarelli2008weak,caffarelli1999numerical}.  Recall
that a paraboloid of revolution is defined by three parameters: its
focal point, its focal distance $\lambda$ and its direction $y$. We
assume that all paraboloids are focused at the origin, and we denote
$P(y,\lambda)$ the convex hull of a paraboloid of revolution with
direction $y$ and focal distance $\lambda$. We will say in the
following that $P(y,\lambda)$ is a solid paraboloid.  Paraboloids of
revolution have the well-known optical property that any ray of light
emanating from the origin is reflected by the surface $\partial
P(y,\lambda)$ in the direction $y$. 
Assume first that one wants to send the light emited from the origin
in $N$ prescribed directions $y_1,\hdots,y_N$. From the property of a
paraboloid of revolution, this can be done by considering a surface
made of pieces of paraboloids of revolution whose directions are among
the $(y_i)$. In the far-field reflector problem, one would also like
to prescribe the amount of light $\alpha_i$ that is reflected in the
direction $y_i$. A theorem of Oliker-Caffarelli
\cite{caffarelli2008weak} ensures the existence of a solution to this
problem: there exist unique (up to a common multiplicative constant)
focal distances $\lambda_1,\hdots,\lambda_N$ such that the surface
$\partial ( \cap_{1\leq i \leq N} P(y_i,\lambda_i)) $ reflects exactly
the amount $\alpha_i$ in each direction $y_i$.  Other types of inverse
problems in geometric optics can be formulated as Minkowski-type
problems involving the union of confocal solid paraboloids, and the
union or intersection of confocal ellipsoids
\cite{oliker2003mathematical,kochengin1997determination}.

\subsection*{Contributions.} 
Motivated by these Minkowski-type problems, our goal is to compute the
union and intersection of solid confocal paraboloids and ellipsoids of
revolution. Using a radial parameterization, each of these computations
is equivalent to the computation of a decomposition of the unit sphere
into cells, that are not necessarily connected.  Our contributions are
the following:

\begin{itemize}
\item We show that each of the four types of cells can be computed by
  intersecting a certain power diagram with the unit sphere
  (Propositions~\ref{prop:pow}, \ref{prop:pow2} and \ref{prop:pow3}).
  The approach is similar to the
  computation of union and intersection of balls using power diagrams
  in \cite{aurenhammer:1987}, or to the computation of
  multiplicatively weighted power diagrams in $\Rsp^{d-1}$ using power
  diagrams in $\Rsp^d$ \cite{Boissonnat:2003:CCE:644108.644159}.
\item We show that the complexity bounds of these four diagram types
  are different. In the case of intersection of solid confocal
  paraboloids in $\Rsp^3$, the complexity of the intersection diagram
  is $O(N)$ (Theorem~\ref{thm-complexity}). This is in contrast with
  the $\Omega(N^2)$ complexity of the intersection of a power diagram
  with a paraboloid in $\Rsp^3$
  \cite{Boissonnat:2003:CCE:644108.644159}. In the case of the union
  and intersection of solid confocal ellipsoids, we recover this
  $\Omega(N^2)$ complexity (Theorem~\ref{thm:ellipsoids-n2}). Finally,
  the case of the union of paraboloids is very different from the case
  of the intersection of paraboloids. Indeed, in the latter case, the
  corresponding cells on the sphere are connected, while in the former
  case the number of connected component of a single cell can be
  $\Omega(N)$ (Proposition~\ref{prop:flower}). The complexity of the
  diagram in this case is unknown.
\item In Section~\ref{sec:computation}, we describe an algorithm for
  computing the intersection of a power diagram with the unit
  sphere. This algorithm uses the exact geometric computation paradigm
  and can be applied to the four types of unions and intersections. It
  is optimal for the union and intersection of ellipsoids, but its
  optimality for the case of intersection of paraboloids is open.
\item This algorithm is then used for the numerical resolution of the
  far-field reflector problem. Using a known optimal transport
  formulation \cite{wang2004design,glimm2003optical} and similar
  techniques to \cite{aurenhammer1998minkowski}, we cast this problem
  into a concave maximization problem in Theorem~\ref{th:OT}. This
  allows us to solve instances with up to 15k paraboloids, improving
  by several order of magnitudes upon existing numerical
  implementations \cite{caffarelli1999numerical}.
\end{itemize}


\section{Intersection of confocal paraboloids of revolution}
Because of their optical properties, finite intersections of solid paraboloids of revolutions with the same focal point play a crucial role in
an inverse problem called the far-field reflector problem. This
inverse problem is explained in more detail in Section~\ref{sec:application}. Here we study the computation and complexity of such an
intersection when the focal point lies at the origin. We call this type
of intersection a paraboloid intersection diagram.



\subsection{Paraboloid intersection diagram}
A paraboloid of revolution in $\Rsp^d$ with focal point at the origin
is uniquely defined by two parameters: its focal distance $\lambda$
and its direction, described by a unit vector $y$. We denote the
convex hull of such a paraboloid by $P(y,\lambda)$. The boundary
surface $\partial P(y, \lambda)$ can be parameterized in spherical
coordinates by the radial map $u\in \Sph^{d-1} \mapsto
\rho_{y,\lambda}(u)\ u$, where the function $\rho_{y,\lambda}$ is
defined by:
\begin{equation} 
\rho_{y,\lambda}: u \in \Sph^{d-1} \mapsto
\frac{\lambda}{1-\sca{y}{u}}.
\end{equation}
Given a family $Y=(y_i)_{1\leq i\leq N}$ of unit vectors and a family
$\lambda = (\lambda_i)_{1\leq i\leq N}$ of \emph{positive} focal distances, the
boundary of the intersection of the solid paraboloids
$(P(y_i,\lambda_i))_{1\leq i \leq N}$ is parameterized in spherical
coordinates by the function:
\begin{equation}
\rho_{Y,\lambda}(u) := \min_{1\leq i \leq N} \rho_{y_i,\lambda_i}(u) = \min_{y \in Y}  \frac{\lambda_i}{1-\sca{y_i}{u}}.
\end{equation} 

\begin{definition}
The \emph{paraboloid intersection diagram} associated to a family of
solid paraboloids $(P(y_i,\lambda_i))_{1\leq i\leq N}$ is a
decomposition of the unit sphere into $N$ cells defined by:
\[
\PI_{Y}^\lambda(y_i) := \{ u \in \Sph^{d-1};~ \forall j \in \{1,\hdots,N\}, \rho_{y_i,\lambda_i}(u) \leq \rho_{y_j,\lambda_j}(u)\}.
\]
The paraboloid intersection diagram corresponds to the decomposition
of the unit sphere given by the lower envelope of the functions
$(\rho_{y_i,\lambda_i})_{1\leq i \leq N}$.
\end{definition}

\subsection{Power diagram formulation}
We show in this section that each cell of the paraboloid intersection
diagram is the intersection of a cell of a certain power diagram with
the unit sphere. We first recall the definition of a \emph{power diagram}.
Let $P = (p_i)_{1\leq i\leq N}$ be a family of points in $\Rsp^d$ and
$(\omega_i)_{1\leq i\leq N}$ a family of weights.  The power diagram
is a decomposition of $\Rsp^d$ into $N$ convex cells, called power
cells, defined by 
\[
\Pow_{P}^{\omega}(p_i) :=\Big\{x\in \Rsp^d,\ \forall j\in \{1,\hdots,N\}\ \nr{x-p_i}^2 + \omega_i \leq \nr{x-p_j}^2+\omega_j\Big\}.
\]
\begin{proposition}
\label{prop:pow}
Let $(P(y_i,\lambda_i))_{1\leq i \leq N}$ be a family of confocal paraboloids. One has
\[
\forall i\in\{1,\hdots,N\}\quad \PI_{Y}^{\lambda}(y_i) = \Sph^{d-1} \cap \Pow_P^{\omega}(p_i),
\]
where $P=(p_i)_{1\leq i \leq N}$ and $(\omega_i)_{1\leq i\leq N}$ are defined by
$p_i = - (\lambda_i^{-1}/2) y_i$ and $\omega_i =  - \lambda_i^{-1} -
\lambda_i^{-2}/4$.
\end{proposition}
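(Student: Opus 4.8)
The plan is to reduce the two set-membership conditions to a single pointwise identity. Both $\PI_Y^\lambda(y_i)$ and $\Sph^{d-1}\cap \Pow_P^\omega(p_i)$ are defined, for $u$ on the sphere, by a conjunction of pairwise inequalities over $j$; hence it suffices to prove that for every $u\in\Sph^{d-1}$ and every pair $(i,j)$ one has $\rho_{y_i,\lambda_i}(u)\le \rho_{y_j,\lambda_j}(u)$ if and only if $\nr{u-p_i}^2+\omega_i\le \nr{u-p_j}^2+\omega_j$. I would obtain this from one clean identity: on the unit sphere, the power function of $(p_i,\omega_i)$ equals a fixed strictly increasing function of $\rho_{y_i,\lambda_i}(u)$.

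The key computation is the following. Since $\nr{u}=1$ and $\nr{y_i}=1$, expanding the square gives $\nr{u-p_i}^2 = 1 - 2\sca{u}{p_i} + \nr{p_i}^2$. With the stated choice $p_i=-(\lambda_i^{-1}/2)y_i$ this equals $1 + \lambda_i^{-1}\sca{y_i}{u} + \lambda_i^{-2}/4$, and adding $\omega_i=-\lambda_i^{-1}-\lambda_i^{-2}/4$ cancels the constant $\lambda_i^{-2}/4$, leaving
\[
\nr{u-p_i}^2+\omega_i = 1 - \lambda_i^{-1}\bigl(1-\sca{y_i}{u}\bigr) = 1 - \frac{1}{\rho_{y_i,\lambda_i}(u)}.
\]
The conceptual point is that on the sphere the quadratic term $\nr{u}^2$ is the constant $1$, so the power function becomes affine in $u$; the parameters $p_i,\omega_i$ are exactly reverse-engineered so that this affine function coincides with $1-\lambda_i^{-1}+\lambda_i^{-1}\sca{y_i}{u}$, that is, with $1-1/\rho_{y_i,\lambda_i}(u)$.

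To conclude I would invoke monotonicity. The map $s\mapsto 1-1/s$ is strictly increasing on $(0,+\infty]$, and $\rho_{y_i,\lambda_i}(u)\in(0,+\infty]$ because $\lambda_i>0$ and $1-\sca{y_i}{u}\ge 0$ for $u\in\Sph^{d-1}$. Hence the displayed identity preserves order: $\rho_{y_i,\lambda_i}(u)\le \rho_{y_j,\lambda_j}(u)$ holds exactly when $\nr{u-p_i}^2+\omega_i\le\nr{u-p_j}^2+\omega_j$, which yields the claimed equality of cells. The only subtlety is the degenerate direction $u=y_i$, where $\rho_{y_i,\lambda_i}(u)=+\infty$; I would handle it by working with the reciprocal $1/\rho_{y_i,\lambda_i}(u)=\lambda_i^{-1}(1-\sca{y_i}{u})$, which stays finite and equal to $0$ there, so that both the identity and the order-preservation remain valid with no case distinction. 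Apart from this, the argument is a direct calculation with no real obstacle.
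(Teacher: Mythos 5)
Your proof is correct and takes essentially the same approach as the paper: your identity $\nr{u-p_i}^2+\omega_i = 1-\lambda_i^{-1}\left(1-\sca{y_i}{u}\right) = 1 - 1/\rho_{y_i,\lambda_i}(u)$ on $\Sph^{d-1}$ is exactly the paper's completing-the-square computation read in the reverse direction, and your monotonicity step ($s\mapsto 1-1/s$ increasing) plays the same role as the paper's passage from $\arg\min_i \rho_{y_i,\lambda_i}(u)$ to $\arg\max_i \lambda_i^{-1}-\sca{u}{\lambda_i^{-1}y_i}$. Your explicit treatment of the degenerate direction $u=y_i$ via reciprocals is a minor point of care that the paper leaves implicit.
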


\begin{proof}
For any point $u\in \Sph^{d-1}$, we have the following equivalence :
\[
u \hbox{ belongs to } \PI_{Y}^\lambda(y_k)
\Longleftrightarrow 
k = \arg\min_{1\leq i\leq N} \frac{\lambda_i}{1-\sca{y_i}{u}}
\Longleftrightarrow 
k = \arg\max_{1 \leq i \leq N}
\lambda_i^{-1}-\sca{u}{\lambda_i^{-1}y_i}.
\]
An easy computation gives :
\[
\begin{array}{rl}
\max_{1 \leq i \leq N}
\lambda_i^{-1}-\sca{u}{\lambda_i^{-1}y_i} &= 
\max_{1 \leq i \leq N} \lambda_i^{-1} -
\nr{u + \frac{1}{2}\lambda_i^{-1} y_i}^2 +
\nr{u}^2 + \frac{1}{4} \nr{\lambda_i^{-1} y_i}^2 \\
&= \nr{u}^2 -
\min_{1\leq i \leq N}
\left(\nr{u + \frac{1}{2}\lambda_i^{-1} y_i}^2 - \lambda_i^{-1}
- \frac{1}{4}\lambda_i^{-2}
\right).
\end{array}
\]
This implies that a unit vector $u$ belongs to the paraboloid
intersection cell $\PI_{Y}^{\lambda}(y_i)$ if and only if it lies in
the power cell $\Pow_P^{\omega}(p_i)$.
\end{proof}

One can remark that the paraboloid intersection diagram does not
change if all the focal distances $\lambda_i$ are multiplied by the
same positive constant. This implies that the intersection \emph{with
  the sphere} of the power cells defined in the above proposition does
not change under a uniform scaling by $\lambda$ (even though the whole power
cells change).

\subsection{Complexity of the paraboloid intersection diagram in $\Rsp^3$}
In this section, we show that in dimension three,
the complexity of the paraboloids intersection
diagram is linear in the number of paraboloids.

\begin{theorem}\label{thm-complexity}
Let $(P(y_i,\lambda_i))_{1\leq i \leq N}$ be a family of solid
paraboloids of $\Rsp^3$.
Then the number of
edges, vertices and faces of its paraboloid intersection diagram is in
$O(N)$.
\end{theorem}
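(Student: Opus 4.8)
The plan is to view the paraboloid intersection diagram as a genuine cell decomposition of $\Sph^2$ and then bound its size with Euler's formula, once the number of faces has been controlled. By Proposition~\ref{prop:pow}, each cell $\PI_Y^\lambda(y_i)$ is $\Sph^2\cap\Pow_P^\omega(p_i)$, so its boundary is made of arcs of circles (each facet of the power diagram lies in a plane, and a plane meets $\Sph^2$ in a circle); a vertex of the diagram is a direction where three paraboloids are simultaneously minimal, and an edge is a maximal arc where exactly two are minimal. Equivalently, writing $K=\bigcap_i P(y_i,\lambda_i)$ and recalling that $P(y_j,\lambda_j)=\{x:\nr{x}-\sca{y_j}{x}\le\lambda_j\}$, the radial map $x\mapsto x/\nr{x}$ is a homeomorphism from $\partial K$ onto $\Sph^2$ sending the face $F_i:=\partial P(y_i,\lambda_i)\cap K$ to the cell $\PI_Y^\lambda(y_i)$. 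Thus it suffices to bound $V,E,F$ for this spherical graph, and the engine will be $V-E+F=2$ together with $2E\ge 3V$ (every vertex has degree at least three), which yields $E\le 3F-6$ and $V\le 2F-4$; everything then reduces to proving $F=O(N)$.

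The crux — and the step I expect to be the main obstacle — is to show that each cell is connected, indeed a topological disk, so that $F\le N$. This cannot follow from convexity of the power cells alone, since a general convex body meets a sphere in a possibly disconnected set; connectivity must come from the special geometry of paraboloids. I would prove it by charting the face $F_i$. Placing $y_i$ as the vertical axis and using the orthogonal coordinates $w\in y_i^\perp$, the surface $\partial P(y_i,\lambda_i)$ is the graph of a convex quadratic over $w$, and along it one gets the identity $\nr{\phi(w)}=\tfrac{1}{2\lambda_i}\bigl(\nr{w}^2+\lambda_i^2\bigr)$, where $\phi(w)$ denotes the corresponding point of $\partial P(y_i,\lambda_i)$. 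The membership condition $\phi(w)\in P(y_j,\lambda_j)$ reads $\nr{\phi(w)}-\sca{y_j}{\phi(w)}\le\lambda_j$, and its quadratic part in $w$ has coefficient $\tfrac{1}{2\lambda_i}\bigl(1-\sca{y_i}{y_j}\bigr)\ge 0$ because $y_i,y_j$ are unit vectors. Hence each constraint pulls back to the sublevel set of a convex function, i.e. a convex subset of the chart, and $F_i$ is an intersection of such sets, therefore convex; being convex it is connected and simply connected, and the two homeomorphisms above transport this to the cell on $\Sph^2$.

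With $F\le N$ established, the conclusion is routine. The diagram is a decomposition of $\Sph^2$ into at most $N$ topological-disk faces whose edges are circular arcs and whose vertices have degree at least three, so $2E\ge 3V$; Euler's relation then gives $E\le 3N-6=O(N)$ and $V\le 2N-4=O(N)$, which is the statement. A few degenerate configurations deserve a word of care: two cells may share a whole circle carrying no vertex, or more than three paraboloids may be concurrent at a single direction. In the first case one inserts an auxiliary vertex on each such circle, adding $O(N)$ to every count; in the second the vertex degree only increases, so $2E\ge 3V$ is preserved. Neither affects the linear bound, and the simple-connectivity of the cells (not merely their connectivity) is precisely what guarantees the faces are disks, so that Euler's formula applies in the form used above.
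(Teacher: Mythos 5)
Your proposal is correct and follows essentially the same route as the paper: its proof of Theorem~\ref{thm-complexity} likewise projects the face $\mathcal{L}_i = \partial P(y_i,\lambda_i)\cap\bigcap_{j\neq i}P(y_j,\lambda_j)$ onto the plane $\{y_i\}^\perp$, shows the projection is an intersection of convex sets (discs, via Proposition~\ref{prop:lemma-2-1}), deduces that each cell is connected so $F\le N$, and concludes with $3V\le 2E$ and Euler's formula. The only cosmetic differences are that you obtain convexity of each projected constraint directly as the sublevel set of a convex quadratic instead of computing the disc's center and radius explicitly (a computation the paper reuses later for Proposition~\ref{prop:flower}), and that your handling of vertexless boundary circles is slightly more careful than the paper's write-up, without changing the argument.
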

The proof of this theorem strongly relies on the following proposition,
which shows that each cell $\PI_{Y}^\lambda(y_i)$ can be transformed
into a finite intersection of discs, 
and is thus connected. Note that while it is stated only in dimension $3$,
this proposition holds in any ambient dimension.
\begin{proposition}\label{prop:lemma-2-1}
For any two solid paraboloids $P(y,\lambda)$ and $P(z,\mu)$, the
projection of the set $\mathcal{L} := (\partial P(y,\lambda)) \cap
P(z,\mu)$ onto the plane $\{y\}^\orth$ orthogonal to $y$ is a disc
with center and radius
\[ c[(y,\lambda),(z,\mu)] = 2\lambda \pi_y(z) / \| y - z\|^2 \qquad
r[(y,\lambda),(z,\mu)] = 2\sqrt{\lambda \mu}/\| y - z\|, \] 
where
$\pi_y$ denotes the orthogonal projection on $\{y\}^\orth$. Moreover,
given a solid paraboloid $P(y,\lambda)$, the following map is
one-to-one:
\[
\begin{array}{rl}
F_{(y,\lambda)}: (\Sph^{d-1}\setminus \{ y \}) \times \Rsp^+ &\to \{y\}^{\orth} \times \Rsp^+\\
(z,\mu) &\mapsto 
\left(c[(y,\lambda),(z,\mu)], r[(y,\lambda),(z,\mu)]\right).
\end{array}
\]
\end{proposition}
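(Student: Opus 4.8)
The plan is to trade the radial descriptions for implicit inequalities and then exploit that, \emph{on one paraboloid}, membership in the other becomes linear. From $\rho_{y,\lambda}(u) = \lambda/(1-\sca{y}{u})$ one reads off $\partial P(y,\lambda) = \{x;\ \nr{x} = \lambda + \sca{y}{x}\}$ and $P(z,\mu) = \{x;\ \nr{x} \leq \mu + \sca{z}{x}\}$. The crucial first step is to substitute the equality $\nr{x} = \lambda + \sca{y}{x}$, valid on $\partial P(y,\lambda)$, into the inequality defining $P(z,\mu)$; this collapses it to the \emph{affine} condition $\sca{y-z}{x} \leq \mu - \lambda$. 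Thus $\mathcal{L}$ is exactly the intersection of the paraboloid $\partial P(y,\lambda)$ with a half-space.

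Next I would write the paraboloid as a graph over $\{y\}^\orth$. Setting $x = w + t\,y$ with $w = \pi_y(x)$ and $t = \sca{y}{x}$, the defining equation becomes $\nr{w}^2 = \lambda^2 + 2\lambda t$, so $t = (\nr{w}^2 - \lambda^2)/(2\lambda)$ and $\pi_y$ restricts to a bijection from $\partial P(y,\lambda)$ onto $\{y\}^\orth$. Consequently the projection of $\mathcal{L}$ is precisely the set of $w$ whose graph point obeys $\sca{y-z}{x}\leq \mu-\lambda$. Plugging in $x = w + t\,y$, using $\sca{y-z}{y} = 1-\sca{y}{z}$ and $\sca{y-z}{w} = -\sca{\pi_y(z)}{w}$, yields a quadratic inequality in $w$; multiplying by $2\lambda/(1-\sca{y}{z})>0$ makes its leading term $\nr{w}^2$, and completing the square rewrites it as $\nr{w-c}^2 \leq r^2$, a disc.

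It then remains to identify $c$ and $r$. Using $\nr{y-z}^2 = 2(1-\sca{y}{z})$, the center simplifies at once to $c = 2\lambda\,\pi_y(z)/\nr{y-z}^2$, as claimed. The radius requires the cancellation coming from $\nr{\pi_y(z)}^2 = 1-\sca{y}{z}^2 = (1-\sca{y}{z})(1+\sca{y}{z})$: inserting this into the expression for $r^2$ makes the $\lambda^2$ contributions cancel and leaves $r^2 = 2\lambda\mu/(1-\sca{y}{z}) = 4\lambda\mu/\nr{y-z}^2$, i.e.\ $r = 2\sqrt{\lambda\mu}/\nr{y-z}$. Since $\lambda,\mu>0$ we always have $r^2>0$, so $\mathcal{L}$ projects onto a genuine disc.

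For injectivity of $F_{(y,\lambda)}$, with $y$ and $\lambda$ fixed, I would invert the map by hand. Writing $z = \cos\theta\,y + \sin\theta\,v$ with $\theta\in(0,\pi]$ and $v$ a unit vector of $\{y\}^\orth$, the half-angle identity turns the center into the transparent form $c = \lambda\cot(\theta/2)\,v$. Hence $\nr{c}/\lambda = \cot(\theta/2)$ recovers $\theta$ (strict monotonicity of $\cot$ on $(0,\pi/2)$) and $v = c/\nr{c}$ recovers the direction, so $z$ is determined; then $\mu = r^2\nr{y-z}^2/(4\lambda)$ recovers $\mu$. The one genuinely conceptual move is the linearization of the first paragraph, which forces the disc structure; the only computation demanding care is the radius cancellation above, and the sole delicate case in the inversion is the antipodal one $z=-y$, where $\pi_y(z)=0$ gives $c=0$ yet $z$ is still uniquely pinned down, so injectivity holds throughout.
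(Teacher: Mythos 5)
Your proof is correct, and its computational core coincides with the paper's: both write $\partial P(y,\lambda)$ implicitly as $\nr{x} = \sca{x}{y} + \lambda$, observe that along this paraboloid the defining condition of $P(z,\mu)$ becomes the affine condition $\sca{y-z}{x} \leq \mu-\lambda$, parameterize the paraboloid as a graph over $\{y\}^\orth$, and complete the square using $\nr{y-z}^2 = 2(1-\sca{y}{z})$ and $\nr{\pi_y(z)}^2 = (1-\sca{y}{z})(1+\sca{y}{z})$ (your radius cancellation checks out: one gets $r^2 = 2\lambda\mu/(1-\sca{y}{z}) = 4\lambda\mu/\nr{y-z}^2$). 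The one genuine divergence is how the disc, rather than its complement, is identified. The paper works with the equality case $\mathcal{L}' = \partial P(y,\lambda)\cap\partial P(z,\mu)$, shows $\pi_y(\mathcal{L}')$ is a circle, and then must decide between the enclosed disc and its complement; it does so by a topological argument: $P(y,\lambda)\cap P(z,\mu)$ is convex and contains no ray, hence compact, hence its projection is compact, which forces the disc. You instead carry the inequality through the entire computation: since the multiplier $2\lambda/(1-\sca{y}{z})$ is positive (here you correctly use $z\neq y$), the inequality keeps its direction and the leading term is $+\nr{w}^2$, so $\nr{w-c}^2 \leq r^2$ drops out directly with no compactness step — a modest but real simplification, at the price of tracking the sign of the multiplier. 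For injectivity, your half-angle parameterization $z = \cos\theta\, y + \sin\theta\, v$ with $c = \lambda\cot(\theta/2)\,v$ is the same argument as the paper's geodesic parameterization $z(t) = \sin(t\pi/2)y + \cos(t\pi/2)z_0$, which yields $c = \lambda\cos(t\pi/2)(1-\sin(t\pi/2))^{-1}z_0$, just in different coordinates; both recover $z$ from $c$ by strict monotonicity along meridians and then $\mu$ from $r$, and you rightly single out the antipodal case $c=0$, where $\theta=\pi$ pins down $z=-y$ without needing the direction $v$ — a degenerate point the paper's formulation handles implicitly. (One cosmetic remark: your monotonicity claim should be stated for $\cot$ on $(0,\pi/2]$, since $\theta/2$ reaches $\pi/2$ when $z=-y$; this does not affect the argument, as $\cot$ is strictly decreasing there.)
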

\begin{proof}
  The proof of the first half of this proposition can be found in
  \cite{caffarelli2008regularity}, but we include it here for the sake
  of completeness.  We first show that the orthogonal projection onto
  the plane $\{y\}^\orth$ of the intersection $\mathcal{L}'
  := \partial P(y,\lambda) \cap \partial P(z,\nu)$ is a
  circle. Without loss of generality, we assume that $y$ is the last
  basis vector $(0,\hdots,0,1)$.  Recall that a paraboloid of
  revolution $\partial P(y,\lambda)$ is defined implicitly by the
  relation $\nr{x} = \sca{x}{y} + \lambda$. Hence, any point $x$ in
  $\mathcal{L}'$ belongs to the hyperplane defined by $\sca{x}{z-y} =
  \lambda - \mu$. If we denote by $z'=\pi_y(z)$, $z_d=\sca{z}{y}$,
  $x'=\pi_y(x)$ and $x_d=\sca{x}{y}$, one has
\[
x_d = \frac{\sca{z'}{x'}}{1-z_d} + \frac{\mu - \lambda}{1-z_d}.
\] The surface $\partial P(y,\lambda)$ can be parameterized over the
plane $\{y\}^\orth$ by the equation $x_d = 
\nr{x'}^2 / 2 \lambda - \lambda / 2$.
Combining this with the relations $\nr{z'}^2 + z_d^2=1$ and $\nr{y-z}^2
= 2(1-z_d)$, we get
\[
\nr{ x' - \frac{2 \lambda}{\nr{y-z}^2} z'}^2 = \frac{4 \lambda \mu}{\nr{y-z}^2}.
\] 
We deduce that the projection of $\mathcal{L}'$ onto the plane
$\{y\}'$ is a circle of center $c=2\lambda z' / \| y - z\|^2$
and of radius $r=2\sqrt{\lambda \mu} / \| y - z\|$. Therefore,
the projection $\pi_y(\mathcal{L})$ is either the disc enclosed by
this circle or its complementary. In order to exclude the latter case,
we remark that the intersection $P(y,\lambda)\cap P(z,\mu)$ is a
compact set, because it is convex and does not include a ray (assuming
$y\neq z$). Hence, the projection $\pi_y(\mathcal{L}) \subseteq
\pi_y(P(y,\lambda)\cap P(z,\mu))$ is also compact, and therefore it is the disc of center $c$ and radius $r$.

Let us now show that the map $F_{(y,\lambda)}$ is one-to-one. For a
fixed positive $\mu$, let $c(z) = c([y,\lambda],[z,\mu])$. For every
point $z$ in $\Sph^{2} \setminus \{ \pm y\}$, denote $\pi_y^1(z) =
\pi_y(z) / \nr{\pi_y(z)}$. This point belongs to the unit circle in
$\{y\}^\orth$, which coincides with the equator $E_y$ of the sphere
$\Sph^{2}$ which is equidistant to the points $\{\pm y\}$.  Then,
given any constant-speed geodesic $z(.)$ such that $z(\pm 1) = \pm y$
and such that $z(0) = z_0 \in E_y$, i.e., $z(t) = \sin(t \pi/2) y +
\cos(t\pi/2) z_0$, the following formula holds
\[ c(z(t)) = 2\lambda\frac{\nr{\pi_y(z(t))}}{\nr{y-z(t)}^2} z_0 =
\lambda \frac{\cos(t\pi/2)}{1 - \sin(t\pi/2)} z_0.\] One easily checks
that the function $t \in[-1,1) \mapsto \frac{\cos(t\pi/2)}{1 -
  \sin(t\pi/2)}$ is increasing and maps $[-1,1)$ to $[0,+\infty)$.
The mapping $z \in \Sph^2\setminus\{y\} \mapsto c(z) \in \{y\}^\orth$
thus transforms bijectively every geodesic arc joining the points $-y$
and $y$ into a ray joining the origin to the infinity on the plane
$\{y\}^\orth$, and is therefore bijective. From the bijectivity of $c$
and the formula defining the radius, one deduces that the map
$F_{(y,\lambda)}$ is one-to-one.
\end{proof}


\begin{proof}[Proof of Theorem \ref{thm-complexity}]
Proposition \ref{prop:lemma-2-1} implies that the projection of the set
\[
{\mathcal{L}}_i = \{ \rho_{Y,\lambda}(u) u, u \in \PI_{Y}^\lambda(y_i)
\} 
= (\partial P(y_i, \lambda_i)) \cap \Big(\bigcap_{j\neq i} P(y_j,\lambda_j)\Big)
\] 
onto the plane orthogonal to $y_i$ is a
finite intersection of discs, and therefore convex. Since the surface
$\partial P(y_i,\lambda_i)$ is a graph over the plane $\{y_i\}^\orth$,
we deduce that $\mathcal{L}_i$ is connected. This implies that its
radial projection on the sphere, namely the paraboloid intersection
cell $\PI_{Y}^\lambda(y_i)$, is also connected.
We denote by $V$ (\textit{resp.} $E$, $F$) the number of vertices (\textit{resp.} edges,  faces). Since, each cell $\PI_{Y}^\lambda(y_i)$ is connected, the number of faces $F$ is bounded by $N$.
Moreover, since there are at least three incident edges for each vertex, we have that $3V \leq 2E$.
Then, by Euler's formula, we have that $V \leq 2F-4$ and $E \leq 3F-6$.
\end{proof}

Even though the complexity of the paraboloid intersection diagram is
$O(N)$, it can not be computed faster than $\Omega(N \log N)$, as
stated in the proposition below.  We first define the genericity
condition used in the statement of this proposition.
\begin{definition}
A family of solid paraboloids $(P(y_i, \lambda_i))_{1\leq i \leq N}$
in $\Rsp^3$ is called \emph{in generic position} if for any subset
$(i_k)_{1\leq k\leq 4}$ of $\{1,\hdots, N\}$, the intersection 
$\bigcap_{1\leq k\leq 4} \partial
P(y_{i_k},\lambda_{i_k})$ is empty.
\end{definition}
Remark that the intersection of four paraboloids $(\partial P(y_{i_1},
\lambda_{i_k}))_{1\leq k\leq 4}$ contains a point $x$ if and only if
the projection $u=x/\nr{x}$ of this point on the unit sphere satisfies
the equations
$\nr{u-p_{i_1}}^2+\omega_{i_1}=\hdots=\nr{u-p_{i_4}}^2+\omega_{i_4}$,
where the points $(p_i)$ and the weights $(\omega_i)$ are defined by
Proposition~\ref{prop:pow}. The genericity condition is then
equivalent to the condition that for any quadruple of weighted points
$(p_{i_k},\omega_{i_k})_{1\leq k \leq 4}$, the weighted circumcenter
does not lie on $\Sph^2$.

\begin{proposition}\label{prop:comp}
The complexity of the computation of the paraboloid intersection diagram is $\Omega(N\log(N))$ under the algebraic tree model, and even under an assumption of genericity.
\end{proposition}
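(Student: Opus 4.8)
The plan is to prove the lower bound by reduction from sorting, exploiting the fact -- implicit in Proposition~\ref{prop:pow} -- that when all focal distances coincide the paraboloid intersection diagram is nothing but a spherical Voronoi diagram, for which the $\Omega(N\log N)$ barrier is classical. Concretely, I would fix $\lambda_i\equiv 1$. Since $t\mapsto 1/(1-t)$ is increasing on $(-\infty,1)$, one has $\PI_Y^1(y_i)=\{u\in\Sph^{2}:\ \sca{y_i}{u}\le\sca{y_j}{u}\ \forall j\}$, which is exactly the nearest-site (spherical Voronoi) cell of the antipode $-y_i$ among the points $\{-y_j\}_j$. Equivalently, by Proposition~\ref{prop:pow} all weights $\omega_i$ are equal, so the associated power diagram is the ordinary Voronoi diagram of the points $p_i=-y_i/2$. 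Thus computing the paraboloid intersection diagram computes a spherical Voronoi diagram, and it suffices to encode an arbitrary sorting instance into the combinatorics of such a diagram.

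Given reals $a_1,\dots,a_N$, I would map them to directions lying, in increasing order, along a short fixed arc $\gamma\subset\Sph^{2}$. Precisely, fix once and for all an increasing map $\sigma:\Rsp\to(0,\epsilon)$ computable in the model (e.g.\ $a\mapsto \tfrac{\epsilon}{2}\big(1+a/\sqrt{1+a^2}\big)$), a curve $\gamma$ on $\Sph^{2}$ with non-degenerate Frenet frame at the origin, and set $y_i:=\gamma(\sigma(a_i))$; the whole construction costs $O(N)$ operations, including square roots, which are allowed in the algebraic computation tree model. Two properties of this encoding are needed. First, the instance is generic: with $\lambda_i\equiv 1$ a point $c\in\Sph^{2}$ is equidistant from four of the $p_{i_k}$ iff $\sca{c}{y_{i_k}}$ is independent of $k$, i.e.\ iff the four directions $y_{i_k}$ lie on a common plane; hence genericity is equivalent to \emph{no four $y_i$ being coplanar}. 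The main technical point is to guarantee this for \emph{every} input: I would choose $\gamma$, after a suitable rotation, so that $\{1,\gamma_1,\gamma_2,\gamma_3\}$ is an extended Tchebycheff system on $[0,\epsilon]$ -- equivalently, so that the successive Wronskians (ending with $\det(\gamma',\gamma'',\gamma''')$, the non-degeneracy of the Frenet frame) are non-vanishing there. Then every plane meets $\gamma|_{[0,\epsilon]}$ in at most three points, so no four of the $y_i$ are ever coplanar and the instance is generic regardless of the values $a_i$.

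Second, the sorted order must be readable from the diagram in linear time. Because the $y_i$ lie on a short convex arc they are in convex position, and so are their antipodes $-y_i$; consequently the dual of the spherical Voronoi diagram is a triangulation whose boundary cycle visits the sites in the order of the arc. Since by Theorem~\ref{thm-complexity} the diagram has size $O(N)$, one extracts this dual, reads off its hull cycle, and breaks it at the endpoint of the arc (the site of extremal $\sigma$-value) in $O(N)$ time, recovering $a_1,\dots,a_N$ in sorted order. Combining the three steps, any algorithm computing the paraboloid intersection diagram of a generic family in time $T(N)$ would sort $N$ reals in $T(N)+O(N)$ operations; since sorting (a fortiori, element distinctness) requires $\Omega(N\log N)$ operations in the algebraic computation tree model, we get $T(N)=\Omega(N\log N)$, even under the genericity assumption. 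The delicate step is the genericity guarantee: a naive placement on a circle makes the two poles equidistant from all sites, creating a vertex of degree $N$, so it is essential that the encoding curve be genuinely of order three, which is exactly what the Tchebycheff/Wronskian condition secures.
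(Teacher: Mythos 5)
Your proposal is correct in substance and follows the same master plan as the paper: take all $\lambda_i=1$, so that by Proposition~\ref{prop:pow} the paraboloid intersection diagram becomes an ordinary (spherical) Voronoi diagram, encode $N$ reals as sites along a curve, read the sorted order off the dual in $O(N)$ time, and invoke the $\Omega(N\log N)$ bound for sorting in the algebraic computation tree model. Where you genuinely differ is the genericity mechanism, and there your version is in one respect stronger. The paper places the $y_i$ on the equator via the rational map $\varphi(t)=(\frac{t^2-1}{1+t^2},\frac{2t}{1+t^2},0)$ and then patches the configuration with six auxiliary sites (four equatorial points and the two poles), reading the order from the cycle of equatorial cells. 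Note, however, that with $\lambda_i\equiv 1$ every paraboloid with equatorial direction passes through both points $(0,0,\pm 1)$, so any four of the equatorial boundaries have nonempty common intersection; adding auxiliary paraboloids does not change the intersection of those four surfaces, it only pushes the coincidence off the lower envelope, so the paper's patched family satisfies its genericity definition only in spirit (all diagram vertices are trivalent), not literally. Your order-three arc avoids this at the root: since every plane meets the arc in at most three points, no four sites are coplanar, and---as you correctly observe, with equal focal distances genericity is exactly non-coplanarity of quadruples, because every plane's unit normal lies on $\Sph^2$---the instance is generic for \emph{every} input in the strict sense of the definition, with no auxiliary sites needed. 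Two steps of yours want tightening, though neither is fatal. First, the extraction step: the spherical Delaunay dual of sites on an order-three arc is the boundary of a cyclic polytope $C_3(N)$, which is a \emph{closed} triangulation with no boundary cycle; its graph is the ordered path $1{-}2{-}\cdots{-}N$ together with the two extremal sites joined to all others, so the order should be recovered from the rotational fan of triangles around one of the two high-degree apex vertices (still $O(N)$), not from a ``hull cycle broken at an endpoint.'' Second, you assert rather than exhibit a spherical arc whose coordinates, together with the constant $1$, form an extended Tchebycheff system \emph{and} which is evaluable in $O(1)$ algebraic operations per site; such arcs exist (any spherical arc with nonvanishing torsion works on a sufficiently small interval), but for the reduction to live inside the algebraic tree model you must pin down an explicit algebraic one---this is precisely where the paper's fully rational equatorial parameterization, at the cost of the auxiliary-point patch, buys its simplicity. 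With those two repairs your argument is complete, and on the genericity side it is cleaner than the paper's own construction.
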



\begin{proof}
We take a family of $N$ real numbers $(t_i)_{1\leq i \leq N}$. For every $i\in\{1,\hdots,N\}$, we put $\lambda_i=1$ and $y_i=\varphi(t_i)$ , where the map $\varphi:\Rsp\to \Rsp^3$ defined by $\varphi(t)=(\frac{t^2-1}{1+t^2},\frac{2t}{1+t^2},0)$ is a parameterization of the equator  $\Sph^2 \cap \{z=0\}$ from which we removed the point $(1,0,0)$. The family of paraboloids $(P(y_i,\lambda))_{1\leq i \leq N}$ is such that every cell $\PI_{Y}^\lambda(y_i)$ is delimited by two half great circles between the two poles, each of these half circles being shared by two cells.
We add the points $y_{N+1}=(1,0,0)$, $y_{N+2}=(0,1,0)$, $y_{N+3}=(-1,0,0)$, $y_{N+4}=(0,0-1,0)$,  $y_{N+5}=(0,0,1)$ and $y_{N+6}=(0,0,-1)$, so that $(P(y_i,\lambda))_{1\leq i \leq N+6}$ is in general position. More precisely, the four points  $y_{N+1},\hdots, y_{N+4}$  are added to ensure that every points of the equator is at a distance strictly less than $\sqrt{2}/2$ from $\{y_1,\hdots,y_{N+4}\}$. The cells of the two poles $y_{N+5}$ and $y_{N+6}$ then do not intersect the equator and we keep the property that there exists a cycle with the $N+4$ vertices of $\{y_1,\hdots,y_{N+4}\}$ in the dual of the paraboloid intersection diagram. Finding this cycle  then amounts to sorting the values $(t_i)_{1\leq i \leq N+4}$. The conclusion holds from the fact that a sorting algorithm has a complexity $\Omega(N\log(N))$ under the \emph{algebraic tree model}.
\end{proof}

\section{Other types of union and intersections} 
Other quadrics, such as the ellipsoid of revolution, or one sheet of a
two-sheeted hyperboloid of revolution can also be parametrized over a
unit sphere by the inverse of an affine map
\cite{oliker2010characterization}.  In this section, we study the
combinatorics of the intersection of solid 
ellipsoids, one of whose focal points lie at the origin. Note that
this intersection naturally appears in the near-field reflector
problem, where one wants to illuminate points in the space instead of directions 
(as in the far-field reflector problem) \cite{kochengin1997determination}.  Furthermore, for both the
ellipsoid and the paraboloid cases, we also study the union of the
convex hulls.  We show that in these three cases, the combinatorics is
still given by the intersection of a power diagram with the unit
sphere. However, the complexity might be higher. We
show that there exists configuration of $N$ ellipsoids whose
intersection and union
have complexity $\Omega(N^2)$. An algorithm that matches this lower bound is provided in Section~\ref{sec:computation}.



\subsection{Union  of confocal paraboloids of revolution}
The union of a family of solid paraboloids $(P(y_i,\lambda_i))_{1\leq
  i \leq N}$ is star-shaped with respect to the origin. Moreover, its
boundary can be parameterized by the radial function $u \in \Sph^{d-1}
\mapsto \left(\max_{1\leq i\leq N} \rho_{y_i,\lambda_i}(u)\right)
\cdot u.$ The \emph{paraboloid union diagram} is a decomposition of
the sphere into cells associated to the upper envelope of the
functions $(\rho_{y_i,\lambda_i})_{1\leq i \leq N}$:
\[
\PU_{Y}^\lambda(y_i) := \{ u \in \Sph^{d-1};~ \forall j \in \{1,\hdots,N\}, \rho_{y_i,\lambda_i}(u) \geq \rho_{y_j,\lambda_j}(u)\}.
\]
As before, these cells can be seen as the intersection of certain
power cells with the unit sphere.
\begin{proposition}\label{prop:pow2}
Given a family $(P(y_i,\lambda_i))_{1\leq i \leq N}$ of solid
paraboloids, one has for all $i$,
\[
\PU_{Y}^{\lambda}(y_i) = \Sph^{d-1} \cap \pow_P^{\omega}(p_i),
\]
where the points and weights are given by $p_i = \frac{1}{2} \lambda_i^{-1} y_i$ and $\omega_i =
\lambda_i^{-1} - \frac{1}{4}\lambda_i^{-2}.$
\end{proposition}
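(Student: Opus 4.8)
The plan is to mirror the argument of Proposition~\ref{prop:pow} almost verbatim, changing only the direction of the envelope (upper instead of lower, i.e.\ a max instead of a min of the radial functions) and tracking how this affects the algebra. Concretely, a unit vector $u$ lies in $\PU_{Y}^\lambda(y_k)$ precisely when $k = \arg\max_{1\leq i\leq N} \rho_{y_i,\lambda_i}(u)$, and since $\rho_{y_i,\lambda_i}(u) = \lambda_i/(1-\sca{y_i}{u})$ with all $\lambda_i>0$ and $1-\sca{y_i}{u}>0$ for $u\neq y_i$, taking reciprocals reverses the inequality. Thus maximizing $\rho_{y_i,\lambda_i}(u)$ is equivalent to minimizing $\lambda_i^{-1}(1-\sca{y_i}{u}) = \lambda_i^{-1} - \sca{u}{\lambda_i^{-1}y_i}$ over $i$.

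The second step is the completion-of-squares identity, carried out exactly as in the previous proof but now producing a \emph{minimum} of the linear expression. I would write
\[
\min_{1\leq i\leq N}\bigl(\lambda_i^{-1} - \sca{u}{\lambda_i^{-1}y_i}\bigr)
= \min_{1\leq i\leq N}\Bigl(\nr{u - \tfrac{1}{2}\lambda_i^{-1}y_i}^2 - \nr{u}^2 - \tfrac{1}{4}\lambda_i^{-2} + \lambda_i^{-1}\Bigr),
\]
where the sign in front of the cross term $\sca{u}{\lambda_i^{-1}y_i}$ now forces the point $p_i = +\tfrac{1}{2}\lambda_i^{-1}y_i$ (opposite sign to Proposition~\ref{prop:pow}, where it was $-\tfrac{1}{2}\lambda_i^{-1}y_i$), and collecting the remaining constant terms gives the weight $\omega_i = \lambda_i^{-1} - \tfrac{1}{4}\lambda_i^{-2}$. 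Pulling out the common $-\nr{u}^2$ term, which is independent of $i$, I obtain
\[
\min_{1\leq i\leq N}\bigl(\lambda_i^{-1} - \sca{u}{\lambda_i^{-1}y_i}\bigr)
= -\nr{u}^2 + \min_{1\leq i\leq N}\bigl(\nr{u - p_i}^2 + \omega_i\bigr).
\]

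The final step is to read off the power-diagram condition: $u$ minimizes $\lambda_i^{-1}-\sca{u}{\lambda_i^{-1}y_i}$ over $i$ exactly when it minimizes $\nr{u-p_i}^2+\omega_i$, which is precisely the defining condition for $u\in\Pow_P^\omega(p_i)$, so $\PU_Y^\lambda(y_i) = \Sph^{d-1}\cap \Pow_P^\omega(p_i)$ with the stated points and weights. I do not anticipate a genuine obstacle here: the whole content is bookkeeping of signs, and the only place to be careful is the reciprocal step, where one must note that all denominators $1-\sca{y_i}{u}$ are strictly positive on $\Sph^{d-1}\setminus\{y_i\}$ (so that $\max$ of the $\rho$'s corresponds to $\min$ of their reciprocals); this legitimizes the reversal of direction that distinguishes the union diagram from the intersection diagram and pins down the opposite sign of $p_i$.
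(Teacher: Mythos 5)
Your proof is correct and is exactly the argument the paper intends: the paper omits the proof of this proposition (saying only ``as before''), and your computation mirrors the proof of Proposition~\ref{prop:pow} verbatim, with the reciprocal step correctly turning the $\max$ of the $\rho_{y_i,\lambda_i}$ into a $\min$ of the affine functions $\lambda_i^{-1}-\sca{u}{\lambda_i^{-1}y_i}$, and the completion of squares correctly yielding the flipped sign $p_i = \tfrac{1}{2}\lambda_i^{-1}y_i$ and weight $\omega_i = \lambda_i^{-1}-\tfrac{1}{4}\lambda_i^{-2}$. Your attention to the positivity of the denominators $1-\sca{y_i}{u}$, which justifies the order reversal under reciprocals, is exactly the right point of care.
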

Proposition~\ref{prop:lemma-2-1} implies that for every $i$, the projection of
${\mathcal{L}}_i = (\partial P(y_i, \lambda_i)) \cap \partial
\left(\bigcup_{1\leq j \leq N} P(y_j,\lambda_j)\right)$ onto the plane
orthogonal to $y_i$ is a finite intersection of \emph{complements of
  discs}. In particular, this set does not need to be connected in
general, and neither does the corresponding 
cell. This situation can happen in practice (see Proposition \ref{prop:flower}). 
Consequently, one cannot use the connectedness argument as in
the proof of Theorem~\ref{thm-complexity} to show that the paraboloid
union diagram has complexity $O(N)$ in dimension three. Actually, the following
proposition shows that a unique cell may have $\Omega(N)$ distinct connected component.

\begin{proposition}\label{prop:flower} One can construct a family of paraboloids
  $(P(y_i,\lambda_i))_{0\leq i\leq N}$ such that the paraboloid union
  cell $\PU_Y^\lambda(y_0)$ has $\Omega(N)$ connected components.
\end{proposition}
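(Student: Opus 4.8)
The plan is to construct an explicit configuration whose union cell for a distinguished paraboloid $P(y_0,\lambda_0)$ is forced to break into $\Omega(N)$ pieces. By Proposition~\ref{prop:lemma-2-1}, the relevant object is the projection onto $\{y_0\}^\orth$ of $\mathcal{L}_0 = (\partial P(y_0,\lambda_0)) \cap \partial(\bigcup_j P(y_j,\lambda_j))$, which is the intersection of the complements of the discs $D_j := F_{(y_0,\lambda_0)}(y_j,\lambda_j)$. So the combinatorial goal reduces to a planar one: arrange $N$ discs in the plane $\{y_0\}^\orth$ so that the region outside all of them, restricted to the relevant annular zone, has $\Omega(N)$ connected components. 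The ``flower'' name in the statement suggests the right picture: place the centers $c[(y_0,\lambda_0),(y_i,\lambda_i)]$ at roughly equal angles around a circle of some radius $R$, with each disc large enough to overlap its two neighbors but leaving a small gap near the center, so that $N$ petal-shaped holes of uncovered region appear between consecutive discs.

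\emph{First} I would fix $y_0$ to be a pole, say $y_0 = (0,0,1)$, so that $\{y_0\}^\orth$ is the $(x,y)$-plane and the equator plays the role of the unit circle there. \emph{Next}, using the bijectivity of $F_{(y_0,\lambda_0)}$ from Proposition~\ref{prop:lemma-2-1}, I would work backwards: rather than choosing the $y_i$ directly, I would prescribe the desired centers $c_i$ and radii $r_i$ of the planar discs and then invoke surjectivity of $F_{(y_0,\lambda_0)}$ onto $\{y_0\}^\orth \times \Rsp^+$ to recover admissible directions $y_i$ and focal distances $\lambda_i$. Concretely, I would take $c_i = R\,(\cos(2\pi i/N), \sin(2\pi i/N))$ for $1 \leq i \leq N$ and a common radius $r$ chosen so that adjacent discs overlap (this needs $r$ slightly larger than the chord $2R\sin(\pi/N)$) but the discs do not cover the central disc of radius $R - r > 0$. \emph{Then} the complement $\bigcap_i (\Rsp^2 \setminus D_i)$ has, besides the unbounded outer region and the central hole, exactly $N$ small lens-shaped components wedged between the inner boundary and each pair of adjacent discs. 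The final step is to pick the distinguished paraboloid $P(y_0,\lambda_0)$ so that its profile $\partial P(y_0,\lambda_0)$, projected to the plane, actually meets these $N$ lens regions and no others, i.e.\ so that each petal genuinely lies on $\mathcal{L}_0$ and survives on the sphere as a distinct connected component of $\PU_Y^\lambda(y_0)$.

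\emph{The main obstacle} I expect is not the planar disc count, which is an elementary inclusion–exclusion picture, but the two-way translation through $F_{(y_0,\lambda_0)}$ and $\rho$. One has to verify that the $c_i$ and $r$ I prescribe land in the image of $F_{(y_0,\lambda_0)}$ (they do, since the image is all of $\{y_0\}^\orth \times \Rsp^+$), and more delicately that the resulting paraboloids really produce the union cell whose profile hits each petal — i.e.\ that on each petal region it is $\rho_{y_0,\lambda_0}$ that attains the maximum among all $\rho_{y_j,\lambda_j}$, so the petal belongs to $\PU_Y^\lambda(y_0)$ and not merely to the geometric complement of the discs. This amounts to a careful choice of $\lambda_0$ (equivalently, a choice of how the profile of $P(y_0,\lambda_0)$ sits relative to the others) guaranteeing that the outer boundary of each petal is traced by $\partial P(y_0,\lambda_0)$ while its inner arcs are the competing surfaces; a convenient way to control this is to make $\lambda_0$ large relative to the $\lambda_i$ so that $P(y_0,\lambda_0)$ dominates far from the pole but is overtaken inside each disc $D_i$. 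Once that monotonicity check is in place, counting the components gives $\Omega(N)$ and closes the proof.
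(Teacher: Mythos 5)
Your overall route is exactly the paper's: reduce to a planar disc arrangement in $\{y_0\}^\orth$ via Proposition~\ref{prop:lemma-2-1}, prescribe centers and radii, and pull the configuration back through the bijection $F_{(y_0,\lambda_0)}$. But there is a genuine gap in the planar step: your ring of $N$ equal discs, each overlapping only its two neighbors, does \emph{not} produce $\Omega(N)$ complementary components. The union of such a ring is a topological annulus, so its complement $\bigcap_i (\Rsp^2\setminus D_i)$ has exactly \emph{two} connected components: the unbounded outer region and one central hole. The ``$N$ small lens-shaped components wedged between the inner boundary and each pair of adjacent discs'' do not exist --- near each inner intersection point of two adjacent discs the uncovered region opens toward the center, so all these pockets merge into the single central hole (there is no separate ``inner boundary'' to seal them off). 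The missing idea, and what makes the paper's configuration a \emph{flower}, is an additional central disc: take $D_1$ to be the unit disc and arrange $D_2,\hdots,D_N$ as petals around it, each petal overlapping $D_1$ and its two neighboring petals. Then each curvilinear triangle bounded by arcs of $D_1$, $D_i$ and $D_{i+1}$ is a closed-off pocket, and the complement of $\bigcup_i D_i$ has $\Omega(N)$ components, which is what the homeomorphism with $\PU_Y^\lambda(y_0)$ then transfers to the sphere.

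A secondary remark: the ``main obstacle'' you anticipate --- tuning $\lambda_0$ so that $\rho_{y_0,\lambda_0}$ attains the maximum on each petal --- is not actually an issue, and no such monotonicity check is needed. By Proposition~\ref{prop:lemma-2-1} the projection of $\partial P(y_0,\lambda_0)\cap P(y_j,\lambda_j)$ onto $\{y_0\}^\orth$ is \emph{exactly} the disc $D_j$; since the solid paraboloids are star-shaped with respect to the origin, a point $\rho_{y_0,\lambda_0}(u)\,u$ of the surface $\partial P(y_0,\lambda_0)$ lies outside $P(y_j,\lambda_j)$ if and only if $\rho_{y_0,\lambda_0}(u)\geq\rho_{y_j,\lambda_j}(u)$, i.e.\ the union cell corresponds precisely to the complement of $\bigcup_j D_j$, with no further condition. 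The paper accordingly just sets $\lambda_0=1$ and concludes directly. So once you replace your ring by the flower (central disc plus petals), the rest of your argument closes the proof with no extra dominance analysis.
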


\begin{proof} Let $y_0$ be an arbitrary point on the sphere, and let
  $\lambda_0=1$. Now, consider a family of disks $D_i$ in the plane
  $H = \{y_0\}^\orth$ with centers and radii $(c_i,r_i)_{1\leq i \leq N}$,
  and such that the set
\[ U= \bigcup_{i=1}^N (H\setminus D_i) = H  \setminus \bigcup_{i=1}^N D_i\]
has $\Omega(N)$ connected components. This is possible by setting up a
flower shape (see Figure \ref{fig:flower}), i.e., $D_1$ is the unit ball and $D_2,\hdots,D_N$ are set
up in a flower shape around $D_1$. By the second part of
Proposition~\ref{prop:lemma-2-1}, one can construct paraboloids
$(P(y_i,\lambda_i))_{1\leq i\leq N}$ such that
$F_{(y_0,\lambda_0)}(y_i,\lambda_i) = (c_i,r_i)$. Then, the first part
of Proposition~\ref{prop:lemma-2-1} shows that the paraboloid union
cell $\PU_{Y}^\lambda(y_0)$ is homeomorphic to $U$, and has therefore
$\Omega(N)$ connected components.
\end{proof}
\begin{figure}[t]
  \centering          
 \includegraphics[width=0.2\textwidth]{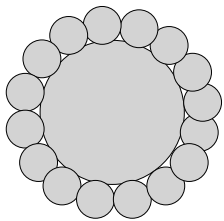}
  \caption{\textbf{A flower} }  \label{fig:flower}
\end{figure}

One can also underline that the complexity of each cell is $O(N)$. This is a direct consequence of 
Proposition~\ref{prop:lemma-2-1} and the fact that the complexity of the union of $N$ planar 
discs is $O(N)$ \cite[Lemma 1]{aurenhammer:1987}.

\subsection{Intersection and union of confocal ellipsoids of revolution}
An ellipsoid of revolution whose one focal point lies at the origin is
characterized by two other parameters: its second focal point $y$ and
its eccentricity $e$ in $(0,1)$. We denote the convex hull of such an
ellipsoid of revolution $E(y,e)$. The surface $\partial E(y,e)$ of
this set is parameterized in spherical coordinates by the function
\[
\sigma_{y,e}(m) := \frac{d}{1-e \sca{m}{\frac{y}{\|y\|}}} \quad \mbox{where } d=\frac{\|y\| (1-e^2)}{2 e}.
\] Note that the value $d$ is fully determined by $e$ and $y$ and is
introduced only to simplify the computations.

Let $Y=(y_i)_{1\leq i \leq N}$ be a family of distinct points in
$\Rsp^d$ and $e=(e_i)_{1\leq i \leq N}$ be a family of real numbers in
the interval $(0,1)$. The boundary of the intersection of solid ellipsoids
$\bigcap_{1\leq i\leq N} E(y_i, e_i)$ is parameterized in spherical
coordinates by the lower envelope of the functions
$(\sigma_{y_i,e_i})_{1\leq i \leq N}$. The \emph{ellipsoid
  intersection diagram} of this family of ellipsoids is the
decomposition of the unit sphere into cells associated to the lower
envelope of the functions $(\sigma_{y_i,e_i})_{1\leq i \leq N}$:
\begin{equation}
\EI_{Y}^e(y_i) := \{ u \in \Sph^{d-1};~ \forall j \in \{1,\hdots,N\}, \sigma_{y_i,e_i}(u) \leq \sigma_{y_j,e_j}(u)\}.
\label{eq:pow3}
\end{equation}
Similarly, the cells of the ellipsoid union diagram are associated to
the upper envelope of the functions $(\sigma_{y_i,e_i})_{1\leq i \leq
  N}$ as follows:
\begin{equation}
\EU_{Y}^e(y_i) := \{ u \in \Sph^{d-1};~ \forall j \in \{1,\hdots,N\}, \sigma_{y_i,e_i}(u) \geq \sigma_{y_j,e_j}(u)\}.
\label{eq:pow4}
\end{equation}
As in the case of paraboloids, the computation of each diagram amounts
to compute the intersection of a power diagram with the unit sphere.
\begin{proposition}\label{prop:pow3}Let $(E(y_i,e_i))_{1\leq i \leq N}$ be a family of solid confocal ellipsoids. Then,
\begin{itemize}
\item[(i)] The cells of the ellipsoid intersection diagram are given by
$\EI_{Y}^{e}(y_i) = \Sph^{d-1} \cap \pow_P^{\omega}(p_i),$
where 
$p_i = -\frac{e_i}{2d_i} \frac{y_i}{\|y_i\|}$
and $\omega_i = -\frac{1}{d_i} -
\frac{e_i^2}{4d_i^2}.$
\item[(ii)]
The cells of the ellipsoid union diagram are given by :
$\EU_{Y}^{e}(y_i) = \Sph^{d-1} \cap \pow_P^{\omega}(p_i),$
where 
$p_i = \frac{e_i}{2d_i} \frac{y_i}{\|y_i\|}$
and
$\omega_i =\frac{1}{d_i} -
\frac{e_i^2}{4d_i^2}.
$
\end{itemize}
\end{proposition}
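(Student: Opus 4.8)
The plan is to mirror exactly the computation carried out in the proof of Proposition~\ref{prop:pow}, since the radial function $\sigma_{y,e}$ has the same algebraic shape as $\rho_{y,\lambda}$: both are of the form $(\text{constant})/(1 - \sca{\text{vector}}{u})$. First I would reduce the membership condition to an $\arg\min$/$\arg\max$ problem. For part (i), a unit vector $u$ lies in $\EI_Y^e(y_i)$ iff $i = \arg\min_j \sigma_{y_j,e_j}(u)$. Writing $\hat y_j := y_j/\|y_j\|$, this is equivalent to $i = \arg\min_j d_j/(1 - e_j\sca{\hat y_j}{u})$, and taking reciprocals (all quantities being positive, since $d_j>0$ and $|e_j\sca{\hat y_j}{u}|<1$) turns the minimization into the maximization $i = \arg\max_j d_j^{-1}(1 - e_j\sca{\hat y_j}{u}) = \arg\max_j \big(d_j^{-1} - \sca{u}{d_j^{-1}e_j\hat y_j}\big)$.

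Next I would complete the square exactly as in Proposition~\ref{prop:pow}. The key identity is
\[
d_j^{-1} - \sca{u}{d_j^{-1}e_j\hat y_j}
= \nr{u}^2 - \Big(\nr{u + \tfrac{1}{2}d_j^{-1}e_j\hat y_j}^2 - d_j^{-1} - \tfrac{1}{4}d_j^{-2}e_j^2\Big),
\]
which one checks by expanding the squared norm and using $\nr{\hat y_j}=1$. Since $\nr{u}^2$ is independent of $j$, maximizing the left-hand side over $j$ is the same as minimizing $\nr{u + \tfrac12 d_j^{-1}e_j\hat y_j}^2 - d_j^{-1} - \tfrac14 d_j^{-2}e_j^2$ over $j$. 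Recognizing $p_i = -\tfrac{e_i}{2d_i}\hat y_i$ and $\omega_i = -\tfrac{1}{d_i} - \tfrac{e_i^2}{4d_i^2}$, this is precisely the condition $\nr{u - p_i}^2 + \omega_i \le \nr{u - p_j}^2 + \omega_j$ for all $j$, i.e.\ $u \in \Pow_P^\omega(p_i)$. This proves (i).

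For part (ii), the argument is identical except that the union diagram selects the \emph{upper} envelope, so I would replace $\arg\min$ by $\arg\max$ at the outset. Taking reciprocals then converts it into an $\arg\min$ of $d_j^{-1} - \sca{u}{d_j^{-1}e_j\hat y_j}$; after the same completion of the square, this becomes the $\arg\max$ over $j$ of $\nr{u - p_j}^2 + \omega_j$ with the signs flipped, giving the point $p_i = +\tfrac{e_i}{2d_i}\hat y_i$ and weight $\omega_i = \tfrac{1}{d_i} - \tfrac{e_i^2}{4d_i^2}$, and hence membership in the corresponding power cell. I do not anticipate a genuine obstacle here; the only point requiring care is the sign bookkeeping when passing between the lower/upper envelope and the min/max over power distances, and the verification that taking reciprocals is order-reversing, which is valid precisely because the denominators $1 - e_j\sca{\hat y_j}{u}$ are strictly positive for $e_j \in (0,1)$ and $u \in \Sph^{d-1}$.
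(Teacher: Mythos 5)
Your proof is correct and is essentially the paper's intended argument: the paper states Proposition~\ref{prop:pow3} without proof precisely because it follows from the same reciprocal-then-complete-the-square computation as Proposition~\ref{prop:pow}, which is exactly what you carry out, and your identity and the resulting $(p_i,\omega_i)$ check out in both cases (including the strict positivity of the denominators $1-e_j\sca{\hat y_j}{u}$ that justifies taking reciprocals). One cosmetic remark: in part (ii) the phrase about the ``$\arg\max$ of $\nr{u-p_j}^2+\omega_j$ with the signs flipped'' is loose---the cleanest route is to complete the square as $-\sca{u}{d_j^{-1}e_j\hat y_j} = \nr{u-\tfrac12 d_j^{-1}e_j\hat y_j}^2 - \nr{u}^2 - \tfrac14 d_j^{-2}e_j^2$, so that the $\arg\min$ of $d_j^{-1}-\sca{u}{d_j^{-1}e_j\hat y_j}$ is directly an $\arg\min$ of power distances---but the point and weight you obtain are the correct ones, so this is a matter of phrasing, not a gap.
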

The theorem below shows that in dimension three the complexity of these diagrams
can be quadratic in the number of ellipsoids. This is in sharp contrast with
the case of the paraboloid intersection diagram, where the complexity
is linear in the number of paraboloids. 
\begin{theorem}\label{thm:ellipsoids-n2}
  In $\Rsp^3$, there exists a configuration of confocal ellipsoids of
  revolution such that the number of vertices and edges in the
  ellipsoid intersection diagram (resp. the ellipsoids union diagram)
  is $\Omega(N^2)$.
\end{theorem}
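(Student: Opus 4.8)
The plan is to construct an explicit configuration of $N$ confocal ellipsoids in $\Rsp^3$ whose intersection (resp. union) diagram has $\Omega(N^2)$ combinatorial complexity. By Proposition~\ref{prop:pow3}, it suffices to find centers $p_i$ and weights $\omega_i$ on (or near) the sphere so that the trace of the power diagram on $\Sph^2$ has quadratic complexity. The natural strategy is to split the $N$ ellipsoids into two groups of size $N/2$, each group being engineered so that its cells, when intersected with the equator or a band around it, form a family of $\Omega(N)$ long, thin, nearly-parallel strips, with the two families of strips transverse to one another. Two transverse families of $\Omega(N)$ strips each produce $\Omega(N^2)$ crossing points on the sphere, hence $\Omega(N^2)$ vertices and edges in the combined diagram.

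The key steps are as follows. First I would recall that, via the second-focus parameterization, an ellipsoid $E(y,e)$ is pinned down by the point-weight pair $(p_i,\omega_i)$ of Proposition~\ref{prop:pow3}, and that choosing the $y_i$ and $e_i$ freely lets me place the $p_i$ essentially anywhere in $\Rsp^3$ (with a corresponding weight). Since a power bisector between two weighted points is a plane, the intersection of each power cell with $\Sph^2$ is a region cut out by planes; two planes meeting the sphere give arcs that cross. Concretely, I would take the first $N/2$ ellipsoids so that their bisecting planes are (nearly) parallel to one great circle, producing $N/2$ parallel arcs crossing a fixed band of the sphere, and the second $N/2$ so that their bisecting planes are parallel to a transverse great circle, producing $N/2$ arcs running the other way. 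Each arc of the first family meets each arc of the second family, yielding $\Omega(N^2)$ intersection vertices inside the band. One then checks that these vertices genuinely appear as vertices of the ellipsoid intersection (resp. union) diagram — i.e. that each crossing is a point where the lower (resp. upper) envelope of the $\sigma_{y_i,e_i}$ switches between the two relevant functions — rather than being hidden beneath a third ellipsoid.

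For the union case I would use the symmetric point-weight formula in part (ii) of Proposition~\ref{prop:pow3}, which differs only by the sign of $p_i$ and the sign of the $1/d_i$ term in $\omega_i$; the same transverse-strip construction applies verbatim after reflecting, since replacing the lower envelope by the upper envelope corresponds to negating $p_i$. The main obstacle is the verification that the crossings are not occluded: with two groups of strips one must guarantee that in the band where both families are active, no single ellipsoid dominates and erases the crossing pattern, which requires tuning the focal directions $y_i$ and eccentricities $e_i$ so that the relevant $\sigma$-values are comparable precisely on the band and the two families genuinely interleave there. Establishing this non-degeneracy — essentially a quantitative control on the radial functions $\sigma_{y_i,e_i}$ over the band — is the delicate part; once it is in place, a counting argument on the $\Omega(N^2)$ transverse crossings, together with Euler's formula to convert the vertex count into edge and face counts, completes the proof.
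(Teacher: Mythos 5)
Your architecture matches the paper's in outline---pass to a power diagram intersected with $\Sph^2$, then build a two-family ``grid'' configuration with $\Omega(N^2)$ pairwise interactions---but two steps are genuinely missing or wrong as stated. First, the reduction: Proposition~\ref{prop:pow3} goes the wrong way for a lower bound; you need the converse statement that an \emph{arbitrary} weighted family $(p_i,\omega_i)$ is realizable by confocal ellipsoids, which is the paper's Lemma~\ref{lemma:ellipsoids-powerdiagrams}. Your parenthetical ``with a corresponding weight'' suggests the weight is forced by the placement of $p_i$; in fact, inverting the formulas shows that the constraint $e_i\in(0,1)$ restricts the realizable weights (roughly, $\omega_i < -\nr{p_i}^2-\nr{p_i}$), and arbitrary weight vectors are only attained after subtracting a large common constant from all $\omega_i$, which leaves the power cells unchanged. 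This must be said, because your construction needs joint control of positions \emph{and} weights to steer the bisector planes. Second, and more seriously, the central claim---that the $\Omega(N^2)$ transverse crossings of the two families of bisector arcs ``genuinely appear as vertices'' of the diagram---is not merely delicate, it is degenerate as described: an interior point of an edge between the cells of $p_i$ and $p_{i'}$ is a point where those two power functions are tied and strictly minimal among all sites; for the same point to lie on an edge between the cells of $q_j$ and $q_{j'}$ would force a four-way tie. So literal crossings of two cell-boundary arcs cannot occur generically, and the picture of two families of strips simultaneously partitioning the same band is incoherent: each point of the sphere lies in exactly one cell.

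What replaces it in the paper is a per-pair witness argument, and this is precisely the content your proposal defers. The paper takes all weights equal (an ordinary Voronoi diagram, legitimate by the lemma), places $k=N/2$ points on a circle of radius $2-\varepsilon$ in the plane $\{z=0\}$ \emph{outside} the sphere and $k$ points on a vertical segment of length $\varepsilon/2$ \emph{inside} the sphere near the origin, and proves by explicit estimates (the equidistant point $m_{i,j}$ satisfies $\delta_{i,j}\leq 1-\varepsilon/2+\varepsilon^2/(64-32\varepsilon)$, and the offset to the sphere is below $\sin(\pi/k)$) that for \emph{every} pair $(i,j)$ there exist points $r^{\pm}_{i,j}\in\Sph^2$ at which exactly $p_i$ and $q_j$ are tied and beat all other sites. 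This directly exhibits $k^2$ distinct spherical edges, one per pair $(\Vor_P(p_i),\Vor_P(q_j))$, with no Euler-formula conversion needed. Your ``quantitative control on the radial functions $\sigma_{y_i,e_i}$ over the band'' \emph{is} this entire argument; conceding it as ``the delicate part'' leaves the proof without its core. The plan is salvageable---replace crossings-of-arcs by per-pair witness points and supply the distance estimates---but as written it does not prove the theorem.
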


The proof of this theorem strongly relies on the following lemma, that
transfers the problem to a complexity problem of power diagrams
intersected by the unit sphere.

\begin{lemma}\label{lemma:ellipsoids-powerdiagrams}
Let $(p_i,\omega_i)_{1\leq i \leq N}$ be a family of weighted points
such that no point $p_i$ lie at the origin. Then, there exists:
\begin{itemize}
\item[(i)] a family of ellipsoids $(E(y_i,e_i))_{1\leq i \leq N}$, such that for all $i$,
$\EI_{Y}^{e}(y_i) = \Sph^{d-1} \cap \pow_P^{\omega}(p_i).$ 
\item[(ii)] a family of  ellipsoids $(E(y_i,e_i))_{1\leq i \leq N}$, such that for all $i$,
$\EU_{Y}^{e}(y_i) = \Sph^{d-1} \cap \pow_P^{\omega}(p_i).$
\end{itemize}
\end{lemma}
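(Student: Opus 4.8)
The plan is to prove this lemma as a converse of Proposition~\ref{prop:pow3}: instead of passing from ellipsoids to weighted points, I want to invert the explicit formulas of that proposition so as to recover, from each weighted point $(p_i,\omega_i)$, a pair $(y_i,e_i)$ that produces it. The essential tool is the observation that a power diagram is unchanged if one adds the same constant $C$ to every weight $\omega_i$, since this leaves all the defining inequalities $\nr{x-p_i}^2+\omega_i \le \nr{x-p_j}^2+\omega_j$ invariant. This freedom is what lets me arrange the sign conditions needed for the inversion to land in the admissible parameter range.

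I would treat case (i) (the intersection) first, case (ii) being identical up to sign changes. Fix an index $i$ and abbreviate $d = d_i$. By Proposition~\ref{prop:pow3}(i) the equations to solve are $p_i = -\tfrac{e_i}{2d}\,\tfrac{y_i}{\nr{y_i}}$ and $\omega_i = -\tfrac{1}{d}-\tfrac{e_i^2}{4d^2}$, with $d = \tfrac{\nr{y_i}(1-e_i^2)}{2e_i}$. Since $\tfrac{e_i}{2d}>0$ and $p_i\neq 0$, the first equation immediately fixes the direction $\tfrac{y_i}{\nr{y_i}} = -\tfrac{p_i}{\nr{p_i}}$ and yields the scalar relation $\nr{p_i} = \tfrac{e_i}{2d}$, whence $\tfrac{e_i^2}{4d^2} = \nr{p_i}^2$. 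Substituting this into the weight equation collapses it to the single relation $\omega_i + \nr{p_i}^2 = -\tfrac{1}{d} = -\tfrac{2\nr{p_i}}{e_i}$, which I can solve explicitly for the eccentricity: $e_i = \tfrac{2\nr{p_i}}{-(\omega_i+\nr{p_i}^2)}$. Once $e_i\in(0,1)$ is secured, the remaining unknown is read off from $d$ as $\nr{y_i} = \tfrac{e_i^2}{\nr{p_i}(1-e_i^2)}$, and then $y_i = -\tfrac{\nr{y_i}}{\nr{p_i}}\,p_i$ is fully determined.

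The main obstacle is precisely the admissibility constraint $e_i\in(0,1)$: the formula for $e_i$ produces a valid eccentricity only when $\omega_i + \nr{p_i}^2 < -2\nr{p_i} < 0$, which need not hold for the given weights. This is where the weight-shift invariance enters. Because $\nr{p_i}$ does not depend on $\omega_i$ and there are only finitely many indices, I can choose a constant $C$ negative enough that $\omega_i + C + \nr{p_i}^2 < -2\nr{p_i}$ for every $i$; replacing $\omega_i$ by $\omega_i+C$ leaves each power cell $\pow_P^{\omega}(p_i)$ unchanged while forcing every $e_i$ into $(0,1)$. With the $(y_i,e_i)$ so constructed, Proposition~\ref{prop:pow3}(i) applied in the forward direction gives $\EI_Y^e(y_i) = \Sph^{d-1}\cap \pow_P^{\omega+C}(p_i) = \Sph^{d-1}\cap\pow_P^{\omega}(p_i)$, which is the claim. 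For part (ii) the same computation with the union formulas of Proposition~\ref{prop:pow3}(ii) yields $e_i = \tfrac{2\nr{p_i}}{\omega_i+\nr{p_i}^2}$ together with the admissibility condition $\omega_i+\nr{p_i}^2>2\nr{p_i}$, now achieved by shifting the weights by a sufficiently large \emph{positive} constant. A minor point I would address in passing is distinctness of the $y_i$: coincident weighted points give identical ellipsoids, while distinct ones are separated by the construction (either in direction or in magnitude), so the resulting family is a legitimate family of confocal ellipsoids.
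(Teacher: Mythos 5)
Your proposal is correct and follows essentially the same route as the paper's proof: invert the formulas of Proposition~\ref{prop:pow3} to obtain $e_i = -2\nr{p_i}/(\omega_i+\nr{p_i}^2)$ and the corresponding $y_i$, then exploit the invariance of power cells under a common additive shift of all weights to force $e_i\in(0,1)$ (a sufficiently negative shift for the intersection, positive for the union). Incidentally, your admissibility condition $\omega_i+\nr{p_i}^2<-2\nr{p_i}$ is the correct one (the paper's displayed inequalities contain a minor slip, writing $\nr{p_i}$ where $2\nr{p_i}$ is needed), and your closing remark on distinctness is extra care that the paper's own proof omits.
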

\begin{proof}
We prove only the first assertion, the second one being similar. Let $i\in\{1,\hdots, N\}$. By inverting the equations
$p_i = -\frac{e_i}{2d_i} \frac{y_i}{\|y_i\|}$ and
$\omega_i =\frac{1}{d_i} +
\frac{e_i^2}{4d_i^2}
$,
we get
\[
y_i=\frac{-4}{(\omega_i +\nr{p_i}^2)^2 - 4 \nr{p_i}^2}\ p_i
\quad  \mbox{and} \quad
e_i=\frac{-2\nr{p_i}}{\omega_i + \nr{p_i}^2}.
\] The condition $e_i \in (0,1)$ is equivalent to $\omega_i < -
\nr{p_i}^2 \mbox{ and } \omega_i < -
\nr{p_i}^2 - \nr{p_i} $. These
inequalities can always be satisfied by substracting a large constant to all
the weights $\omega_i$, an operation that does not change the power
cells.
\end{proof}

We recall that the Voronoi diagram of a point cloud $(p_i)_{1\leq i \leq N}$ of $\Rsp^3$ is the decomposition of the space in $N$ convex cells defined by 
\[
\mathrm{Vor}_P(p_i):=\{x \in \Rsp^3,\ \nr{x-p_i} \leq \nr{x-p_j} \}.
\] 
The next proof is illustrated by Figure~\ref{fig:omega_n2}.

\begin{proof}[Proof of Theorem \ref{thm:ellipsoids-n2}]
  Thanks to Lemma \ref{lemma:ellipsoids-powerdiagrams}, it is
  sufficient to build an example of Voronoi diagram whose intersection
  with $\Sph^2$ has a quadratic number of edges and vertices. We can
  consider without restriction that $N=2k$ is even. Let
  $\varepsilon>0$ be a small number. We let $p_1,\hdots,p_k$ be $k$
  points uniformly distributed on the circle centered at the origin in
  the plane $\{z=0\}$, and with radius $2-\varepsilon$. We also
  consider $k$ evenly distributed points $q_1,\hdots,q_k$ on
  $(A,B)\setminus \{O\}$, where $A=(0,0,-\varepsilon/4)$,
  $B=(0,0,\varepsilon/4)$ and $O=(0,0,0)$.  Our goal is now to show
  that for any $i$ and $j$ in between $1$ and $k$, the intersection
  $\Vor_P(p_i) \cap \Vor_P(q_j) \cap \Sph^2$ is non-empty, where $P =
  \{p_i\}\cup \{q_j\}$. We denote by $m_{i,j}$ the unique point which
  is equidistant from $p_i$ and $q_j$, and which lies in the
  horizontal plane containing $q_j$ and in the (vertical) plane
  passing through $p_i$, $q_j$ and the origin. A simple computation
  shows that the distance $\delta_{i,j}$ between $m_{i,j}$ and $q_{j}$
  satisfies $\delta_{i,j} \leq 1-\varepsilon / 2 +
  \varepsilon^2/(64-32\varepsilon)$. Taking $\varepsilon$ small
  enough, this implies that the point $m_{i,j}$ belongs to the unit
  ball. The line passing through $m_{i,j}$ and orthogonal to the plane
  passing through $p_i$, $q_j$ and the origin cuts the unit sphere at
  two points $r^\pm_{i,j}$. These two points belong to the same
  horizontal plane as $q_j$, and they both belong to
  $\Vor_P(q_j)$. Moreover, the distance between $m_{i,j}$ and each of
  these two points is less than $(1-\delta_{i,j}^2)^{1/2}$.  We can
  take $\varepsilon$ small enough so that this distance is less than
  $\sin(\pi/k)$.  This implies that the points $r_{i,j}^\pm$ also
  belong to the Voronoi cell $\mathrm{Vor}(p_i)$ and therefore to the
  intersection $\Vor_P(p_i) \cap \Vor_P(q_j) \cap \Sph^2$. Thus, the
  intersection between the Voronoi diagram and the sphere has at least
  $k^2=\frac{1}{4}N^2$ spherical edges.
\end{proof}

\begin{figure}[t]
  \centering          
  \includegraphics[width=0.9\textwidth]{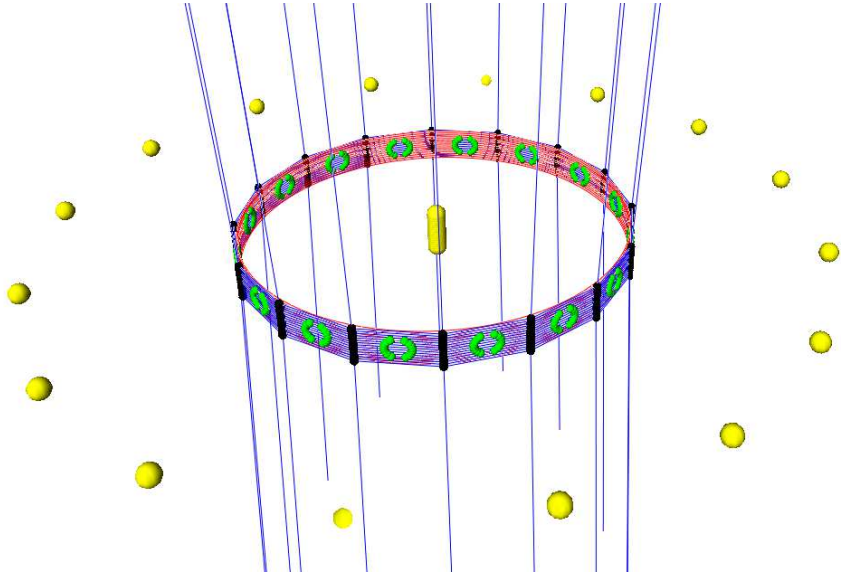} 
  \caption{\textbf{An $\Omega(N^2)$ construction.} {\small  \emph{From a total of $N$ sites, half of them in a segment inside the sphere, and half of them on a circle outside the sphere, we obtain a total of $\Omega(N^2)$ edges embedded on the sphere. } }  \label{fig:omega_n2}}
\end{figure}

\section{Computing the intersection of a power diagram with a sphere}
\label{sec:computation}

In this section, we describe a robust and efficient algorithm to
compute the intersection between the power diagram of weighted points
$(P,\omega)$ and the unit sphere. We call this 
\emph{intersection diagram}. Given the set of weighted points
$(p_i,\omega_i)$, we define the $i$th intersection cell as
$\ID_P^\omega(p_i) = \Sph^{2} \cap \Pow_P^\omega(p_i)$.  This
algorithm is implemented using the Computational Geometry Algorithms
Library (\textsc{CGAL})~\cite{cgal}, and bears some similarity to
\cite{Boissonnat:2003:CCE:644108.644159,Delage_cgal-basedfirst}.  We
concentrate our efforts in obtaining the exact combinatorial structure
of the diagram on the sphere for weighted points whose coordinates and
weight are rational. Exact constructions can be obtained, but are not
our focus, since in our applications, the results of this section are
used as input of a numerical optimization procedure; see
Section~\ref{sec:application}.

Note that the intersection diagram has 
complex combinatorics. If one assumes that the weighted points are
given by a paraboloid intersection diagram,
Proposition~\ref{prop:lemma-2-1} shows that the intersection cells are
connected. But even in this case, the intersection between two
intersection cells $\ID_P^\omega(p_i)$ and $\ID_P^\omega(p_j)$ can
have multiple connected components. Consequently, one cannot hope to
be able to reconstruct these cells from the adjacency graph (i.e., the
graph that contains the points in $P$ as vertices, and where two
vertices are connected by an arc if the two cells intersect in a
non-trivial circular arc), even in this simple case. In general, the
cells of the intersection diagram can be disconnected and they can
have holes, as shown in
Figure~\ref{fig:example_of_diagrams_on_sphere}. In the next paragraph,
we propose an algorithm to compute a boundary representation of these
cells. Recall the following definitions concerning power diagrams:

\begin{definition}
When two power cells $\Pow_P^\omega(p_{i})\cap \Pow_P^\omega(p_{j})$
intersect in a $2$-dimensional set, that is a (possibly unbounded)
polygon in $\Rsp^3$, they determine a \emph{facet} of the power
diagram. Similarly, a \emph{ridge} is a $1$-dimensional intersection
$\Pow_P^\omega(p_{i})\cap \Pow_P^\omega(p_{j}) \cap
\Pow_P^\omega(p_{k})$ between three power cells. A ridge can be either
a ray or a segment. The boundary of each facet can be written as a
finite union of ridges.
\end{definition}

\begin{figure}[t]
  \centering          
  \subfloat[general]{\label{fig:example_of_diagrams_on_sphere:1}\includegraphics[width=0.35\textwidth]{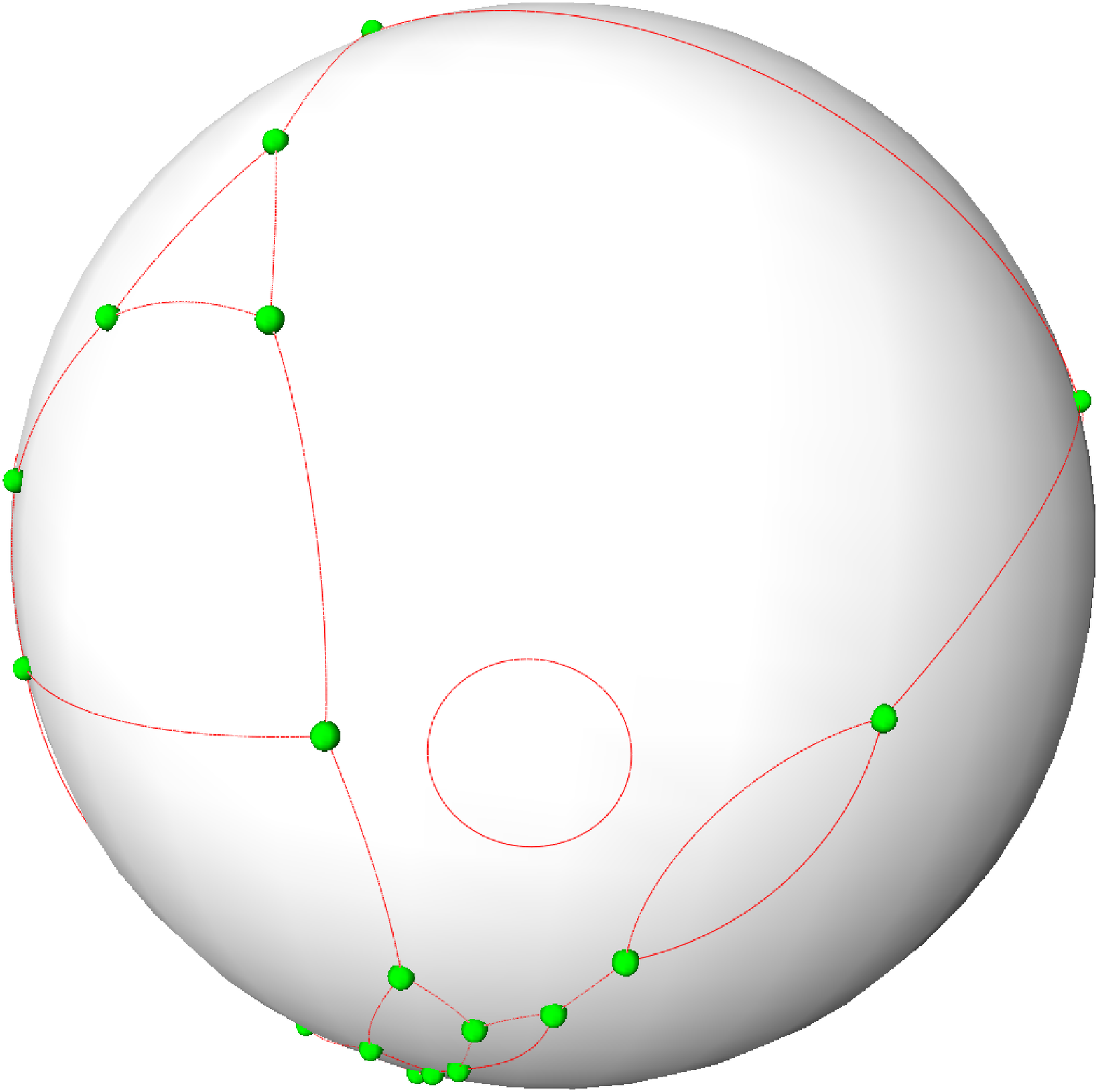}}
  \subfloat[cube]{\label{fig:example_of_diagrams_on_sphere:2}\includegraphics[width=0.4156\textwidth]{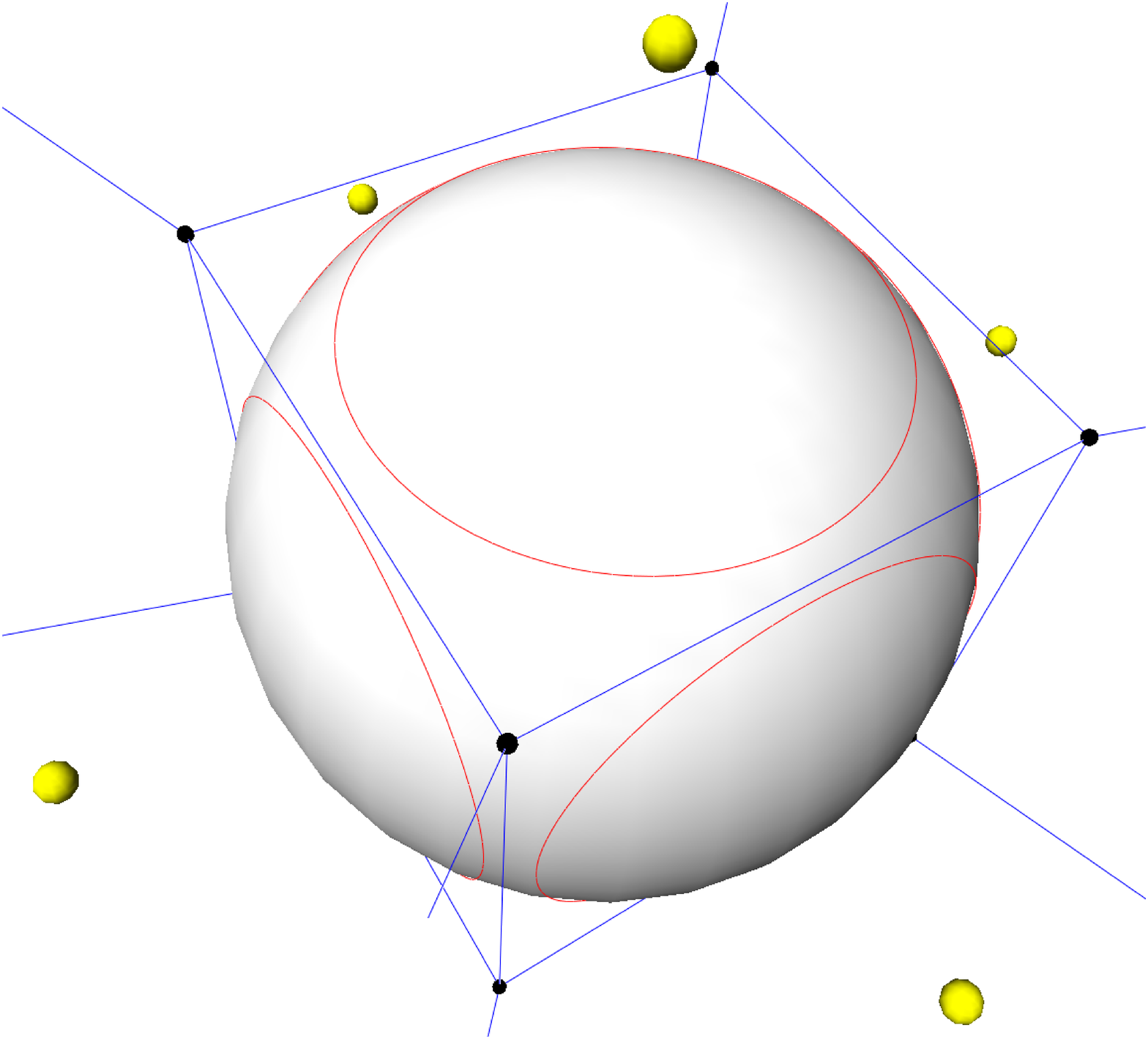}} 
  \caption{\textbf{Examples of diagram on the sphere.} {\small  \emph{Left: an intersection diagram containing faces with no vertices, faces with holes and faces with two vertices or more. Right: a diagram corresponding to the intersection of a cube and a sphere; there are seven faces, six faces with no vertices, and a face with no vertices and six holes.} }  \label{fig:example_of_diagrams_on_sphere}}
\end{figure}

In the 
remainder of this section, we explain how to compute a single
intersection cell $\ID_P^\omega(p_i)$, for a given $p_i$, without any
assumption on the points and weights.
This intersection cell can be quite intricate (multiple connected
components, holes), and will therefore be represented by its
\emph{oriented boundary}. More precisely, the boundary
$\ID_P^\omega(p_i)$ is a finite union of closed curves called
\emph{cycles}. These cycles are oriented so that for someone walking
\emph{on} the sphere following a cycle, the intersection cell
$\ID_P^\omega(p_i)$ lies to the right. The cell is uniquely determined
by its oriented boundary. The following theorem is the main result of this section.

\begin{theorem}\label{thr:alg-complexity}
There is an $O(N \log N + C)$ algorithm for obtaining the intersection
diagram of a set $P$ of $N$ weighted points in $\mathbb{R}^3$, where
$C$ is the complexity of the power diagram of $P$.
\end{theorem}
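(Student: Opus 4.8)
The plan is to first build the three–dimensional power diagram of $(P,\omega)$ as a combinatorial structure (vertices, ridges, facets and their incidences, stored in a doubly connected edge list), and then to read off, cell by cell, the oriented boundary of each intersection cell $\ID_P^\omega(p_i)=\Sph^2\cap\Pow_P^\omega(p_i)$ by intersecting the facets and ridges of the power cell $\Pow_P^\omega(p_i)$ with the unit sphere. Computing the power diagram of $N$ weighted points in $\Rsp^3$ can be done in $O(N\log N + C)$ time by standard techniques (lifting the weighted points and computing the associated convex hull / regular triangulation, whose size is $C$). Granting this, the entire burden of the proof is to show that, once the power diagram is available, the oriented boundary of every intersection cell can be extracted in total time $O(C)$, i.e. proportional to the size of the diagram.

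The geometric heart of the extraction is the following observation. Since $\Pow_P^\omega(p_i)$ is a convex polyhedron, a point $u\in\Sph^2$ lies on the boundary of $\ID_P^\omega(p_i)$ relative to $\Sph^2$ exactly when $u$ lies on some facet $F$ of $\Pow_P^\omega(p_i)$, so that this relative boundary equals $\Sph^2\cap\partial\Pow_P^\omega(p_i)=\bigcup_{F}(\Sph^2\cap F)$. Each facet $F$ is contained in a bisector plane, which meets $\Sph^2$ along a circle $\gamma_F$ (or misses it), and $\Sph^2\cap F$ is exactly the part of $\gamma_F$ lying inside the convex polygon $F$. Hence each contribution is a union of \emph{circular arcs} whose endpoints are the intersections with $\Sph^2$ of the ridges bounding $F$; these endpoints are precisely the vertices of the intersection diagram, each arising from a ridge crossing $\Sph^2$ in at most two points. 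I would compute the arcs of a single facet by clipping the convex polygon $F$ to the disc bounded by $\gamma_F$: walking once around $\partial F$ in the cyclic order furnished by the power diagram, and recording the crossings with $\gamma_F$, produces the arcs of $\Sph^2\cap F$ directly in boundary order, in time $O(\deg F)$ and crucially \emph{without} an explicit angular sort. Summing over all facets and using $\sum_F \deg F = O(C)$, this step costs $O(C)$.

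It then remains to stitch the arcs into the oriented cycles describing $\partial\ID_P^\omega(p_i)$. At each vertex (a ridge–crossing point), exactly two arcs of the cell meet, one on each of the two facets of $\Pow_P^\omega(p_i)$ sharing that ridge, and I would glue the incoming arc to the outgoing one, choosing the orientation so that the cell lies to the right. Facets whose circle $\gamma_F$ lies entirely inside $F$ (no ridge crossing) contribute a vertex–less cycle, i.e. a full circle; these are detected during the clipping walk and handled separately, as are cells lying entirely inside or outside the ball. Tracing the glued arcs yields a finite family of closed oriented cycles, which by the ``cell on the right'' convention uniquely determines $\ID_P^\omega(p_i)$, whether or not it is connected or has holes. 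Each gluing and trace step is $O(1)$ per arc and vertex, so the assembly is again $O(C)$, giving the claimed $O(N\log N + C)$ bound.

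The main obstacle is not the asymptotics but the combinatorial and numerical correctness of the extraction. Because an intersection cell may be disconnected and may have holes, its adjacency graph carries too little information, so the proof must argue that the purely local ``cell on the right'' stitching is globally consistent and recovers \emph{every} cycle, in particular the vertex–less full circles. The delicate efficiency point is precisely to avoid an angular sort on each facet, which would cost $O(C\log N)$, by exploiting the convexity of the facets in the clipping traversal. Finally, since the algorithm follows the exact geometric computation paradigm on rational input, all predicates (a ridge endpoint inside or outside the unit ball, the side of a circle, the cyclic order of crossings along a facet) must be realized as exact constant–degree algebraic tests, and the degenerate configurations (ridges tangent to $\Sph^2$, vertices lying exactly on $\Sph^2$, facet circles tangent to a ridge) must be given a consistent symbolic resolution so that the stitching never breaks.
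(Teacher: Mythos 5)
Your proposal follows essentially the same route as the paper: build the power diagram via the regular triangulation in $O(N\log N + C)$, then for each cell walk around each facet, record the ridge--sphere crossings as source/target vertices matched in the cyclic order furnished by convexity (no angular sort), handle vertex-less full-circle cycles by a separate enclosure test, and stitch the resulting boundary graph into oriented cycles in time linear in the diagram size, with $D = O(C)$ closing the bound. The only differences are presentational (you phrase the structure as a DCEL on the primal diagram, while the paper traverses the dual triangulation in \textsc{CGAL} and counts simplex visits), so there is nothing substantive to flag.
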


From Theorem~\ref{thm:ellipsoids-n2}, the size of the output in the
worst case for the intersection (or union) of confocal ellipsoids is
$\Omega(N^2)$. This implies that the algorithm described below is
optimal for computing an ellipsoid intersection (or union)
diagram. The optimality of the algorithm for the paraboloid
intersection diagram is an open problem, which can be phrased as
follows: 

\begin{openquestion}
Is the complexity of the power diagram given by Proposition \ref{prop:pow} bounded by $O(N)$ ?
\end{openquestion}

\paragraph{Remark.} Even if the question above has a negative answer, 
at least for the particular case of the intersection diagram of confocal paraboloids, there exists an
easy randomized incremental algorithm attaining the lower bound in Proposition~\ref{prop:comp}.
If the input is $N$ confocal paraboloids, from Theorem~\ref{thm-complexity}, we know that
the complexity of the paraboloid intersection diagram is $O(N)$. Moreover the resulting cells are connected.
Adopting the \emph{randomized incremental construction} paradigm (see \cite{Sack:2000:HCG:337150,Mulmuley} 
for a comprehensive study), this means that the expected complexity of the cell of the $i$th inserted paraboloid 
is $O(N)/i$. Besides, thanks to the connectedness of the cells, we can detect the conflicts with 
a breadth-first search in linear time as well,  without any additional structure. 
Therefore, the total expected cost is $\sum_{i=1}^n O(N)/i$, which is in $O(N \log{N})$. 
Notice that this analysis relies on the fact that the cells are connected, which is not necessarily true in the case of union of paraboloids 
and in the ellipsoids cases.
In our application, we always use the generic and robust algorithm based on the intersections of power cells, using \textsc{CGAL}.



\subsection{Predicates}
Geometric algorithms often rely on predicates, i.e., estimation of
finite-valued geometric quantities such as the orientation of a
quadruple of points in $3$D, or the number of intersections between a
straight line and a surface.  These predicates need to be evaluated
exactly: if this is not the case, some geometric algorithm 
may even not terminate~\cite{kettneratal:2008}.  We adopt the
\emph{dynamic filtering} technique for the computation of our
predicates in \textsc{CGAL}. This means that for every arithmetic operation, we
also compute an error bound for the result. This can be automatized
using interval arithmetic~\cite{bronnimannetal:2001}. If the error
bound is too large to determine the result of the predicate, the
arithmetic operation is performed again with exact rational
arithmetic, allowing us to return an exact answer for the predicate.
Our algorithm requires the following three predicates:
\begin{itemize}
	\item[(i)] \texttt{has\_on(Point p)} returns $+1$, $0$, or
          $-1$, if the point is outside, on, or inside the unit sphere
          respectively. 
	\item[(ii)] \texttt{number\_of\_intersections(Ridge r)}
          returns the number of intersections between $r$ and the unit
          sphere ($0,1$ or $2$), where $r$ can be a segment or a
          ray. By convention, segment or ray touching the sphere
          tangentially has two intersection points, with same
          coordinates. Remark that the result of the first predicate
          can be obtained directly from the intermediate computations
          of this predicate. As we use them intertwined, this shortcut
          is adopted in our implementation.
	\item[(iii)] \texttt{plane\_crossing\_sphere(Facet f)} returns the number of intersections between the supporting plane of $f$ and the sphere ($0,1$ or $+\infty$). One may notice that this predicate is just the dual of the first one.
\end{itemize}

Even though the algorithm below is presented as working directly on
the cells of the power diagram, our actual implementation uses \textsc{CGAL},
and we work with the regular triangulation of the set of points
(that is, the dual of the power diagram). This triangulation is a simplicial complex with
$0$-, $1$-, $2$- and $3$-simplices. In \textsc{CGAL}, only the $0$- and
$3$-simplices are actually stored in memory, but it remains
possible to traverse the whole structure in linear time in the number
of simplices. In order to handle the boundary simplices, \textsc{CGAL} takes
the usual approach of adding a \emph{point at infinity} to the initial
set of points. In practice, this means the following. A ridge of the
power digram can be either a segment or a ray, and is dual of a
$2$-simplex in the triangulation. In \textsc{CGAL}, such a ridge corresponds to
two adjacent tetrahedra, one of whom may contain the point at
infinity. The predicates above need to be adapted in order to handle
$k$-simplices that are incident to the point at infinity. The general
predicates have been implemented, but the details are omitted here.




\subsection{Computation of the oriented boundary}
The output of our algorithm is the oriented boundary of the
intersection cell $\ID_P^\omega(p_i)$. Note that this is a purely
combinatorial object, and no geometric construction are performed
during its computation. It is described as a sequence of
\emph{vertices on the sphere}, \emph{oriented arcs}, and
\emph{oriented cycles}.

\begin{figure}[t]
  \label{fig:obtaining_arcs}
  \centering          
  \subfloat[inside]{\label{fig:obtaining_arcs:inside}\includegraphics[width=0.25\textwidth]{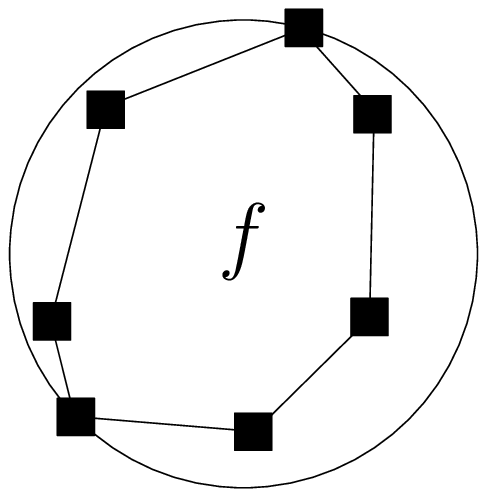}}
  \hspace{0.5cm}
  \subfloat[intersecting]{\label{fig:obtaining_arcs:intersecting}\includegraphics[width=0.25\textwidth]{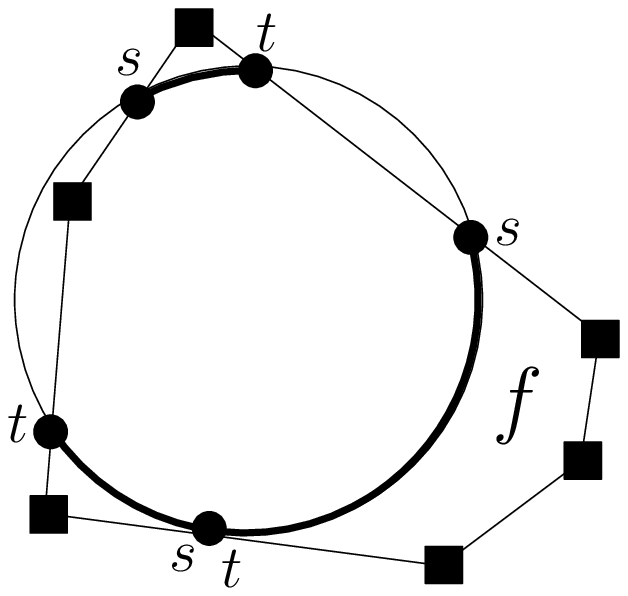}} 
  \hspace{0.5cm}
  \subfloat[enclosing]{\label{fig:obtaining_arcs:enclosing}\includegraphics[width=0.25\textwidth]{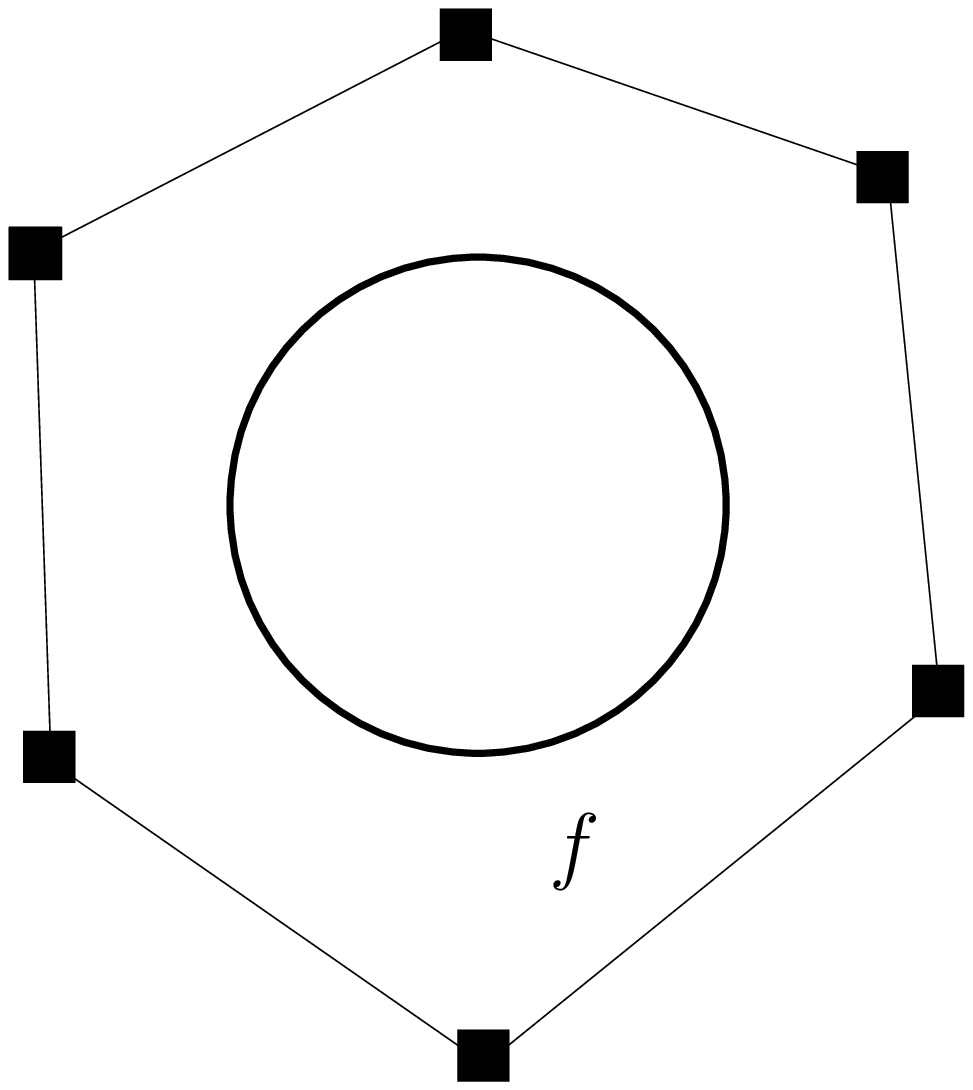}}
  \caption{\textbf{Obtaining the arcs of $\ID_P^\omega(p_i)$.} {\small  \emph{Three distinct cases: (a) inside the circle, (b) intersecting the circle, and (c) enclosing the circle. There is still a case not depict above, $f$ can be outside the circle as well. The arcs with zero length, such as the one in (b) can be removed from the diagram depending on its use.} }}
\end{figure}

$\bullet$ A \textbf{vertex on the sphere} is the result of an
intersection between a ridge of the power diagram and the unit sphere.
A vertex on the sphere is not constructed explicitly, but stored as an
ordered pair of extremities of the corresponding ridge. Since a ridge
$[x_1,x_2]$ can intersect the sphere twice, the order determines which
intersection point to consider, i.e., the pair $(x_1,x_2)$ describes
the intersection point that is \emph{closest} to $x_1$. This
convention allows us to handle infinite ridges, and to uniquely
identify vertices on the sphere. In degenerate cases, vertices may
correspond to different ordered pairs, but having the same
coordinates.

$\bullet$ An \textbf{oriented arc} corresponds to the (oriented)
intersection between $\Pow_P^\omega(p_i)$, another cell
$\Pow_P^\omega(p_j)$ and the unit sphere. Two situations
arise. Either the arc is a complete circle, in which case it is also a
cycle. Or the two extreme points are vertices as defined above,
that is, they are the intersection between ridges and the unit sphere. An arc is
oriented so that, when walking on the sphere from the first point to
the second, the cell lies on the right.



$\bullet$ An \textbf{oriented cycle} is a connected component of the
oriented boundary of $\ID_P^\omega(p_i)$. A cycle can have two or more
vertices on the sphere, in this case, they are represented by a cyclic
sequence of arcs. However, a cycle can also have no vertices, and in
such a case, it is represented by a single full arc. The orientation
of a cycle is given by the orientation of its arcs.

Notice that the boundary of a power cell is composed of several convex
facets, that can possibly be unbounded.  The intersection of such
facets with the unit sphere gives the arcs of the intersection
diagram.  Our algorithm constructs the oriented boundary of the
intersection cell $\ID_P^\omega(p_i)$ by iterating over each facet of
the power cell $\Pow_P^\omega(p_i)$ and obtaining implicitly both the
set of vertices on the sphere $V$ and the set of oriented arcs
$E$. The oriented graph $G=(V,E)$ is called the \emph{boundary
  graph}. Once $G$ is constructed, the oriented cycles in the boundary
$\partial \ID_p^\omega(p_i)$ coincide with the (oriented) connected
components of $G$, and can be obtained by a simple traversal. In the
next paragraph, we explain the construction of $G=(V,E)$.

\subsection{Computation of the boundary graph}
We start by discarding the facets that do not contribute to the
intersection diagram.  We iterate over all facets $f$ of
$\Pow_P^\omega(p_i)$, and keep $f$ only if the predicate
\texttt{plane\_crossing\_sphere}$(f)$ returns $\infty$. (The case where
it returns $1$ corresponds to a trivial intersection between the facet
and the sphere and can be safely ignored.) Assuming that $f$ is not
discarded, the intersection of its supporting plane with the unit
sphere is a circle in $3$D.  If all vertices of $f$ are on or inside
the sphere (i.e., \texttt{has\_on}$\leq 0$), then $f$ is inside or only
touching the circle, and is discarded. See
Figure~\subref{fig:obtaining_arcs:inside}.

Given a facet $f = \Pow_P^\omega(p_i) \cap \Pow_P^\omega(p_j)$ that
was not discarded previously, we build a sequence of vertices on the
sphere by traversing $f$ in a clockwise sequence of ridges, oriented
with respect to the vector $p_j - p_i$~\footnote{We do not need orientation predicates at this point, since, in 
\textsc{CGAL}, $3$-simplices of a triangulation are positively oriented.}. 
We distinguish two different
kinds of vertices on the sphere, the \emph{\textbf{s}ource vertices} and the
\emph{\textbf{t}arget vertices}, as shown on
Figure~\subref{fig:obtaining_arcs:intersecting}. Given an oriented
ridge $r=[x_1,x_2]$, we compute the value of
\texttt{number\_of\_intersections(r)}. If the ridge has zero
intersections with the sphere, we skip it.  If there is only one
intersection between $r$ and the unit sphere, the corresponding vertex
is a source if $x_1$ is inside the sphere (i.e., \texttt{has\_on}(x\_1)
$\leq 0$) and it is a target vertex in the other case.  Finally, if
the ridge intersects the sphere twice, the oriented pair $(x_1,x_2)$
describes the target vertex, and the pair $(x_2,x_1)$ is the source
vertex. The arcs of the boundary graph $G$ corresponding to this facet
are obtained by matching each source vertex with a target vertex using
the same cyclic sequence. They are represented by an ordered pair of
vertex indices.

Finally, we need to consider the case where none of the ridges of $f$
intersect the sphere. There are two possibilities: either the facet
$f$ encloses the circle $\pi \cap \Sph^2$, where $\pi$ is the affine plane
spanned by $f$, or the circle is completely outside from $f$. In the former
case, we obtain an arc without vertex in the boundary of
$\ID_P^\omega(p_i)$, as shown in
Figure~\subref{fig:obtaining_arcs:enclosing}. Its orientation is
clockwise with respect to the vector $p_j - p_i$. To detect whether
this happens, we simply need to determine whether the center of the
circle lies inside or outside the convex polygon $f$. This is a
classical routine that can be performed using signed volumes. For this
test, we only need to use rational numbers because the center of the
circle is the projection of the origin on a plane described with
rational coordinates.

\begin{proof}[Proof of Theorem~\ref{thr:alg-complexity}]
First we build the regular triangulation of the set of $N$ weighted
points in $O(N \log N + C)$ time.  Running the algorithm described in
this section for every $p_i$ makes us visiting each $1$-simplex,
$2$-simplex, and $3$-simplex of the primal no more than two, three,
and four times respectively; and for each visit, it takes $O(1)$ time,
since we compute constant-degree predicates. Then we visit each
generated cycle at most twice. Therefore, the complexity of our
algorithm is in $O(N \log N + C + D)$, where $D$ is the complexity of
the diagram on the sphere. The fact that $D = O(C)$ completes the
proof.
\end{proof}




\section{Application to Minkowski-type problems involving paraboloids}
\label{sec:application}

This section deals with the far-field reflector problem
\cite{caffarelli2008weak}, which is a Minkowski-type problem involving
intersection of confocal paraboloids of revolution.  Consider a point
source light located at the origin $O$ of $\Rsp^d$ that emits lights
in all directions.  The intensity of the light emitted in the
direction $x \in \Sph^{d-1}$ is denoted by $\rho(x)$. Now, consider a
hypersurface $R$ of $\Rsp^d$. By Snell's law, every ray $x$ emitted by
the source light that intersects $R$ at a point whose normal vector is
$n$ is reflected in the direction $y=x-2\sca{x}{n}n$.  This means that
after reflection on $R$, the distribution of light at the origin given
by $\rho$ is transformed into a distribution $\nu_R$ on the set of
directions at infinity, a set that can also be described by the unit
sphere. The {\textit{far-field reflector problem}} consists in the
following inverse problem: given a density $\rho$ on the source sphere
$\Sph^2$ and a distribution $\mu$ on the sphere at infinity, the
problem is to find an hypersurface $R$ such that $\mu_R$ coincides
with $\mu$. For computational purposes, we assume that the target
measure $\nu$ is supported on a finite set of directions
$Y:=(y_i)_{1\leq i \leq N}$, and it is thus natural to consider a
reflector made of pieces of paraboloids with directions $Y$.  This
problem is numerically solved using an optimal transport formulation
due to \cite{wang2004design,glimm2003optical}.

\begin{figure}[t]
  \label{fig:monge_reflector}
  \centering          
  \subfloat[Initial $(\lambda_i)_{1\leq i \leq N}$]{\label{fig:initial-lambda_i}\includegraphics[width=0.3\textwidth]{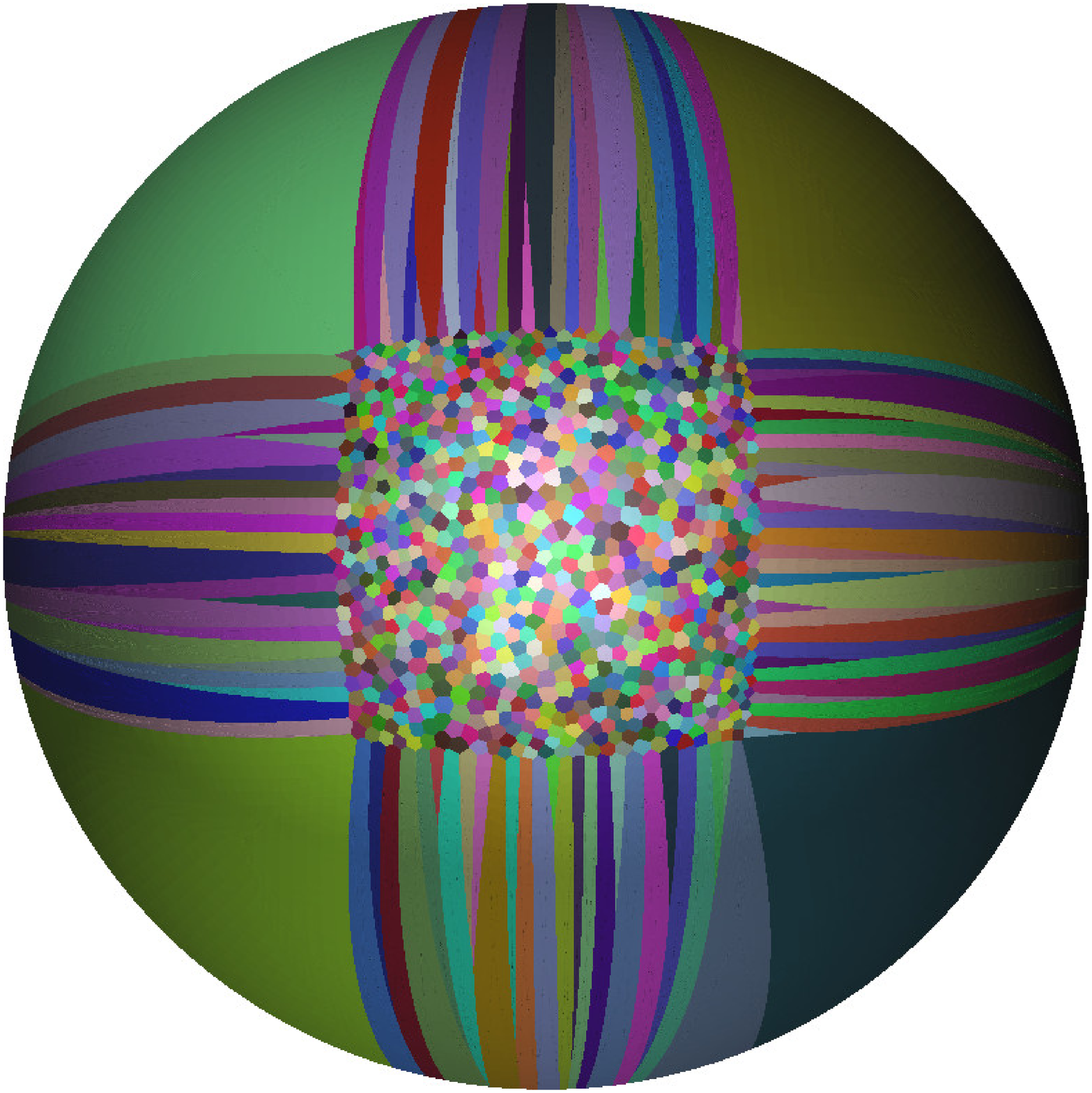}}
  \hspace{0.5cm}
 \subfloat[Final  $(\lambda_i)_{1\leq i \leq N}$]{\label{fig:final-lambda_i}\includegraphics[width=0.3\textwidth]{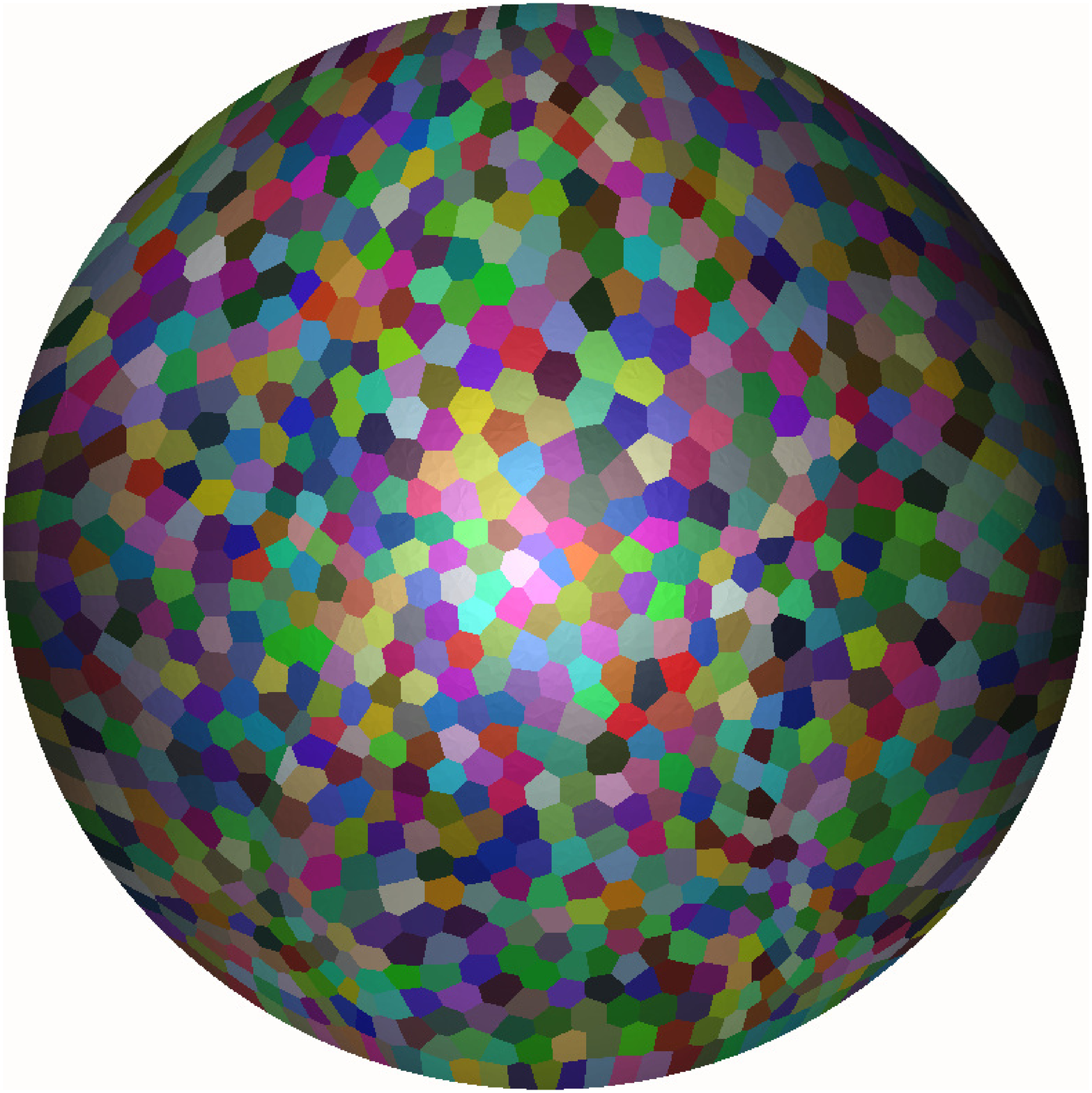}}
  \hspace{0.5cm}
 \subfloat[Rendering]{\label{fig:monge:1k}\includegraphics[width=0.3\textwidth]{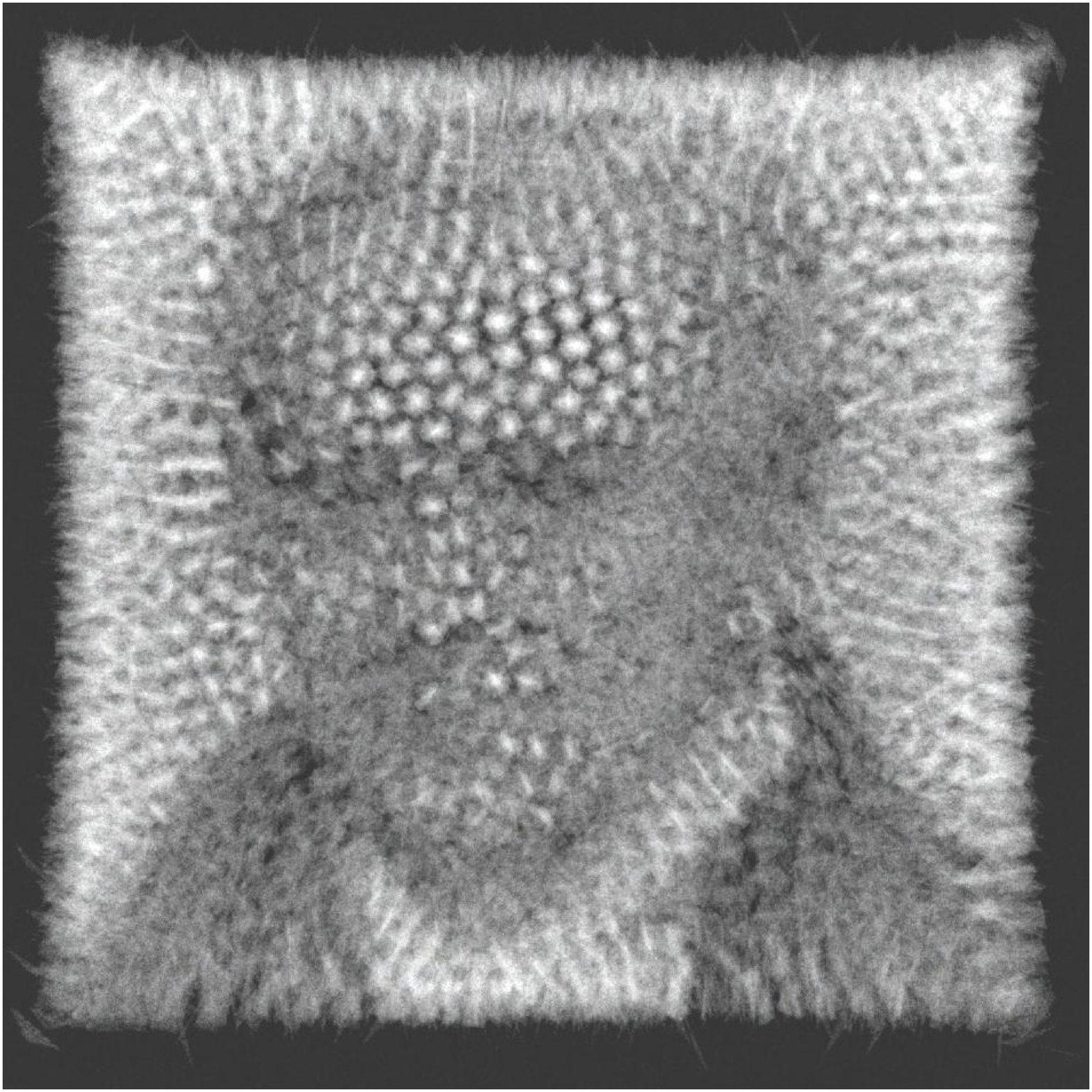}}  \\
  \subfloat[Final  $(\lambda_i)_{1\leq i \leq N}$]{\label{fig:monge:15k}\includegraphics[width=0.3\textwidth]{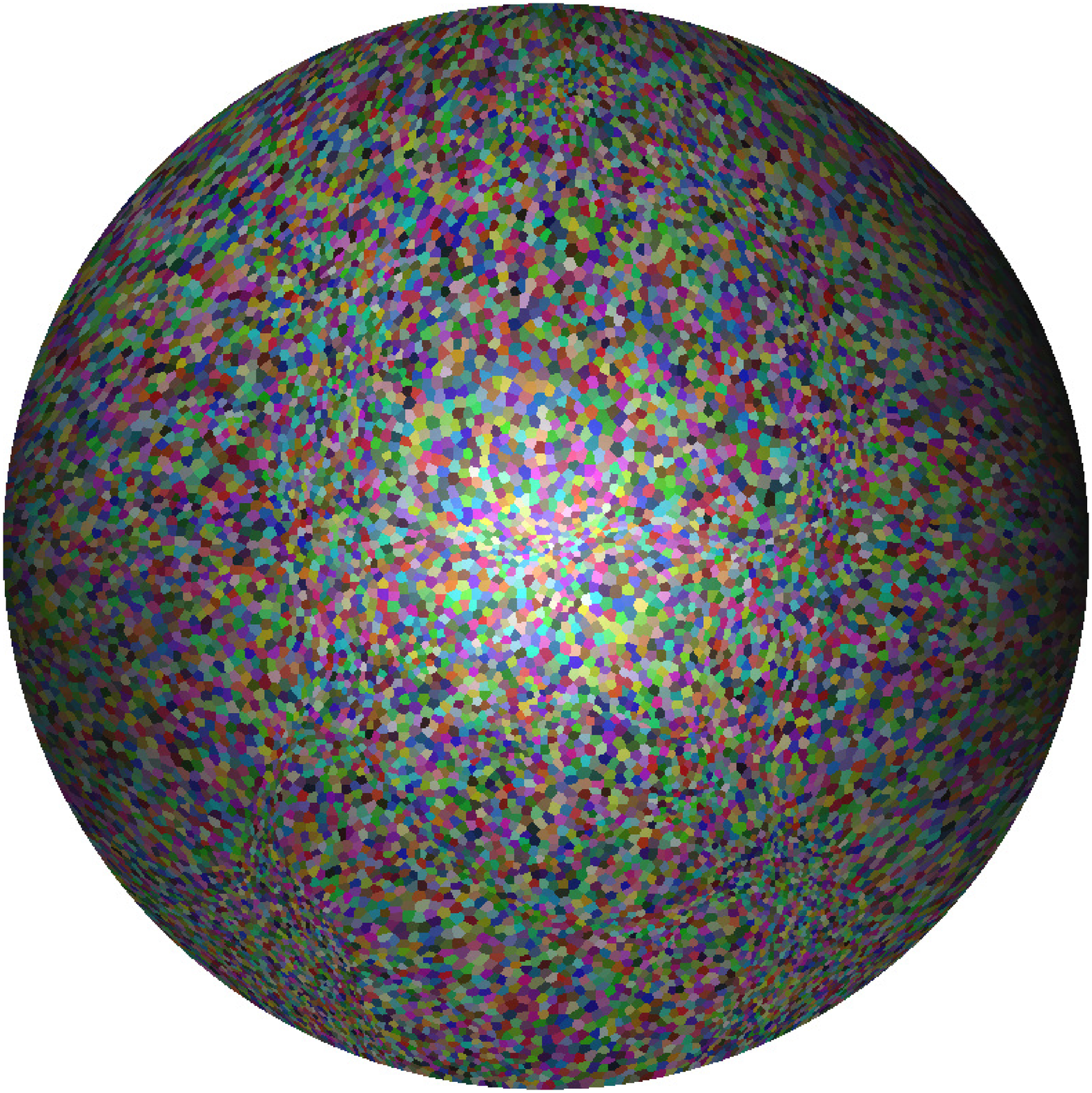}}
  \hspace{0.5cm}
 \subfloat[Reflector]{\label{fig:reflector}\includegraphics[width=0.3\textwidth]{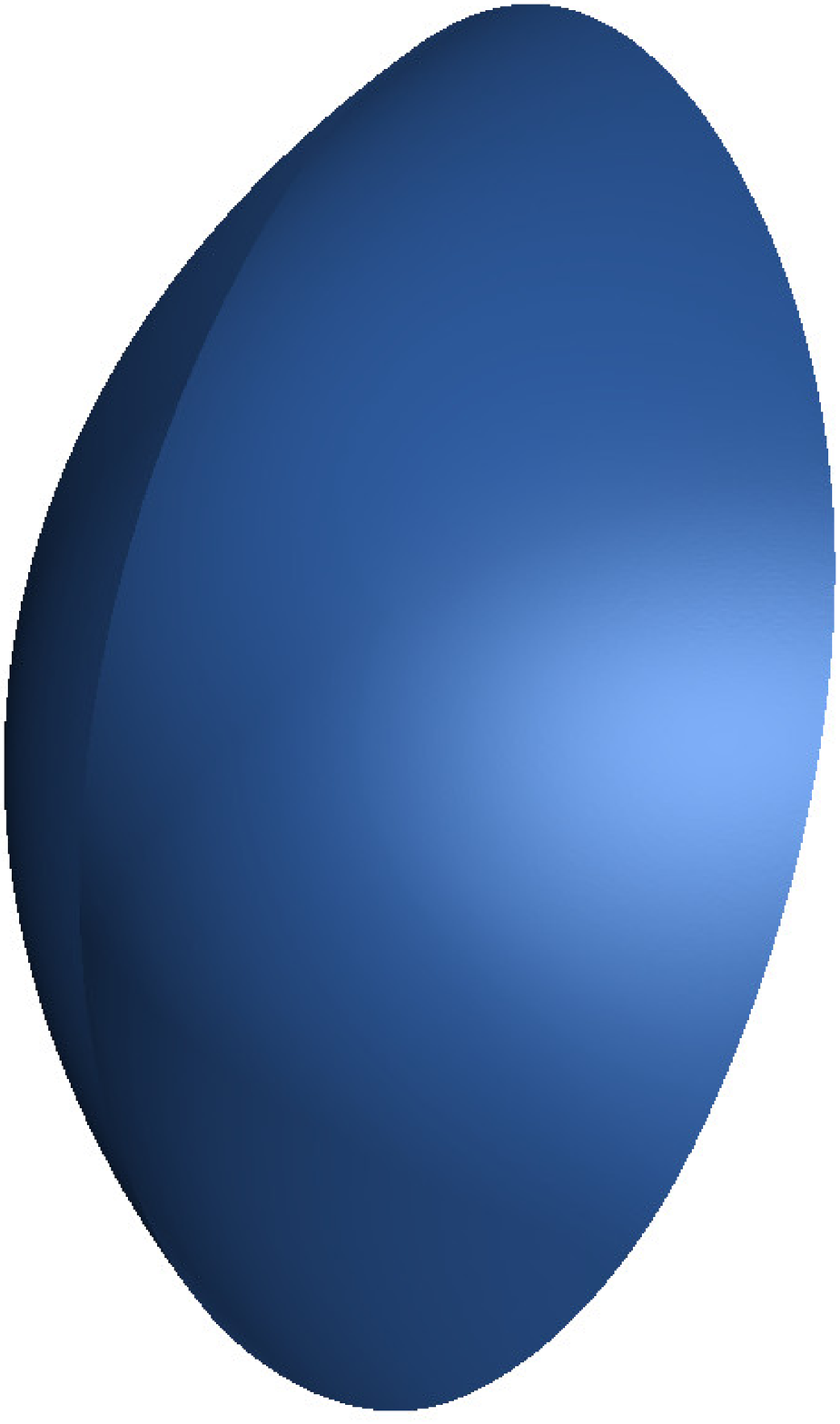}}
  \hspace{0.5cm}
 \subfloat[Rendering]{\label{fig:rendering}\includegraphics[width=0.3\textwidth]{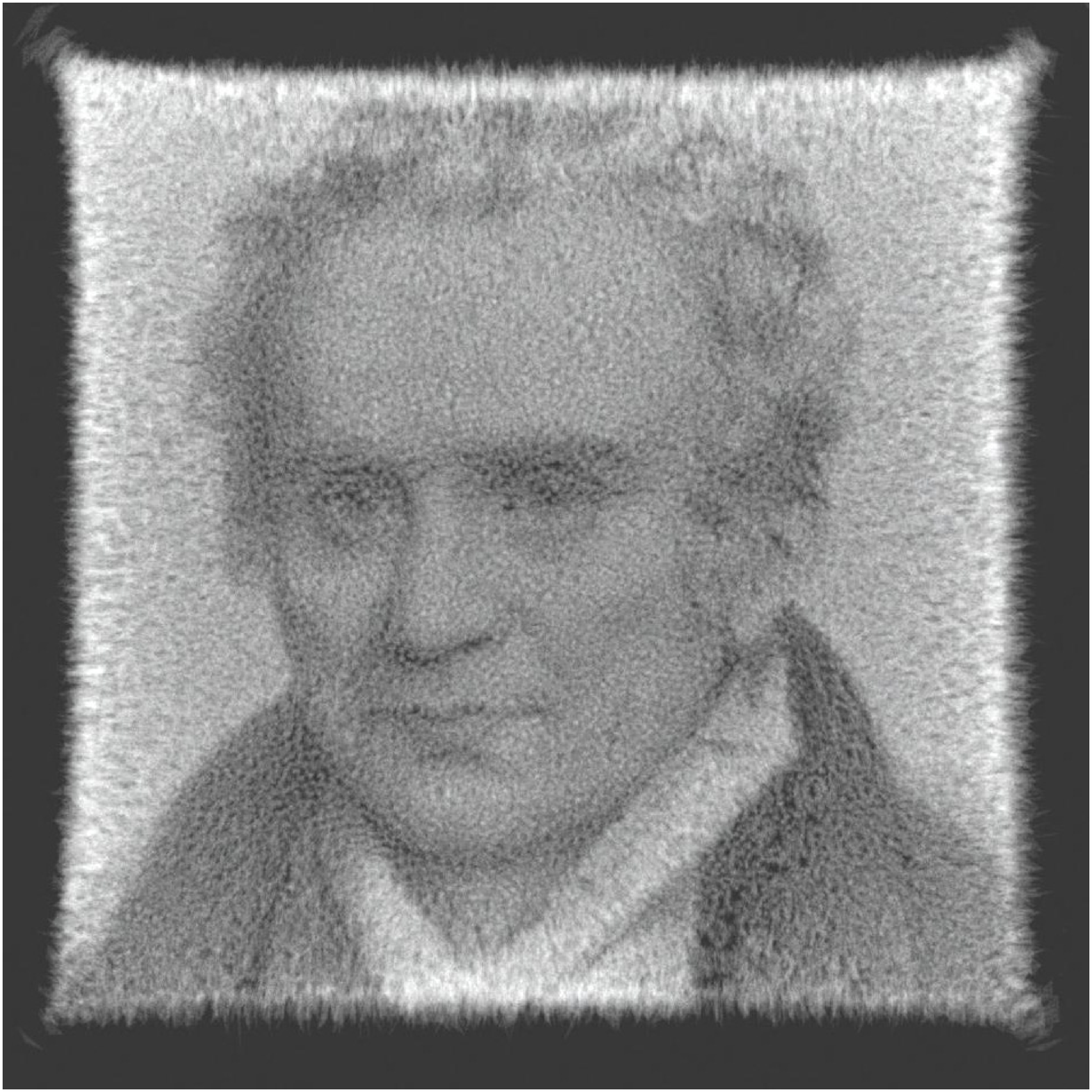}}  
\caption{\label{fig:monge}\textbf{Numerical computation.} {\small
    \emph{ Calculations were done with $N=1000$ paraboloids for the
      first row and $N=15000$ paraboloids for the second row. (a)
      Paraboloid intersection diagram for an initial
      $(\lambda_i)_{1\leq i \leq N}$. (b,d) Final intersection diagram
       after optimization. (e) Reflector surface defined by the
      intersection of paraboloids. (c,f) Simulation of the
      illumination at infinity from a punctual light source
      lighting uniformly $\Sph_-^2$, using \textsc{LuxRender}, a
 physically accurate raytracer engine.  
}}}
\end{figure}


\subsection{Far-field reflector problem.} We consider a finite family
$Y =(y_1,\hdots,y_N)$ of unit vectors that describe directions at
infinity, non-negative numbers $(\alpha_i)_{1\leq i\leq N}$ such that
$\sum_{1\leq i\leq N} \alpha_i = 1$ and a probability density $\rho$
on the unit sphere. The far-field reflector problem consists in
finding a vector of non-negative focal distances $(\lambda_i)_{1\leq
  i\leq N}$, such that
\begin{equation}
\forall i \in \{1,\hdots,N\},~~ \rho(\PI_Y^\lambda(y_i)) = \alpha_i,
\label{eq:FF}
\tag{FF}
\end{equation}
where for a subset $X$ of the sphere, $\rho(X):=\int_{X} \rho(u) \dd
u$ is the weighted area of $X$.  
The far-field reflector problem can be transformed into the
maximization of a concave functional, combining ideas from
\cite{wang2004design,glimm2003optical,aurenhammer1998minkowski}. 

\begin{theorem} 
\label{th:OT}
A vector of focal distances $(\lambda_i)_{1\leq i\leq N}$ solves the
far-field reflector problem \eqref{eq:FF} if and only if the vector
$(\gamma_i)_{1\leq i\leq N}$ defined by $\gamma_i = \log(\lambda_i)$
is a global maximizer of the following $\Class^1$ concave function:
\begin{equation}\Phi(\gamma) := \left[\sum_{i=1}^N \int_{\PI_Y^{\exp \gamma}(y_i)} (c(u,y_i) + \gamma_i) \rho(u)\dd u \right] - \sum_{i=1}^N \gamma_i \alpha_i 
\label{eq:Phi}
\end{equation}
where $c(u,v) := - \log(1-\sca{u}{v})$ and with the convention
$\log(0)= -\infty$. Moreover, the gradient of the function $\Phi$ is given by
\begin{equation}
\nabla\Phi(\gamma) :=(\rho(\PI_Y^{\exp(\gamma)}(y_i)) - \alpha_i)_{1\leq i\leq
  N}.\label{eq:gradPhi}
\end{equation}
\end{theorem}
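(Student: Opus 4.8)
The plan is to recognize $\Phi$ as the Kantorovich dual functional of a semi-discrete optimal transport problem with cost $c$, and to exploit its concavity. First I would rewrite the cells in the additive (logarithmic) coordinates $\gamma_i=\log\lambda_i$. Taking logarithms in the defining inequality $\rho_{y_i,\lambda_i}(u)\leq \rho_{y_j,\lambda_j}(u)$---which is licit since both sides are positive away from the poles $u=y_j$, where $\rho_{y_j,\lambda_j}=+\infty$ is never a minimizer---turns it into $\gamma_i+c(u,y_i)\leq \gamma_j+c(u,y_j)$, so that
\[
\PI_Y^{\exp\gamma}(y_i) = \Big\{u\in\Sph^{d-1};~ i \in \textstyle\arg\min_{1\leq j\leq N}\big(\gamma_j + c(u,y_j)\big)\Big\}.
\]
Introducing $\bar c_\gamma(u) := \min_{1\leq j\leq N}\big(\gamma_j + c(u,y_j)\big)$ and using that the cells cover the sphere with pairwise intersections of measure zero, the bracketed integral in \eqref{eq:Phi} collapses to $\int_{\Sph^{d-1}} \bar c_\gamma(u)\rho(u)\dd u$, whence $\Phi(\gamma) = \int_{\Sph^{d-1}}\bar c_\gamma(u)\rho(u)\dd u - \sum_{i=1}^N \gamma_i\alpha_i$.

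With this reformulation, concavity is immediate: for each fixed $u$ the map $\gamma\mapsto \bar c_\gamma(u)$ is a minimum of affine functions, hence concave, and integrating against the nonnegative density $\rho$ and adding the linear term $-\sum_i\gamma_i\alpha_i$ preserves concavity. I would also record here that $\Phi$ is finite: although $c(\cdot,y_j)\to+\infty$ near $y_j$, for each $u$ there is always an index $j$ with $y_j$ far from $u$, so $\bar c_\gamma$ stays bounded above on the compact sphere, while $c\geq -\log 2$ bounds it below.

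The core of the proof is the gradient formula, which I would establish by differentiating under the integral sign. For fixed $u$ outside the cell boundaries, $\bar c_\gamma(u)=\gamma_{j(u)}+c(u,y_{j(u)})$ for a unique minimizing index $j(u)$, so $\partial_{\gamma_i}\bar c_\gamma(u)$ equals $1$ if $u\in\PI_Y^{\exp\gamma}(y_i)$ and $0$ otherwise. The interchange of $\partial_{\gamma_i}$ and $\int$ is justified because $\gamma\mapsto\bar c_\gamma(u)$ is $1$-Lipschitz uniformly in $u$ (each affine piece has slope in $\{0,1\}$ in the variable $\gamma_i$), providing a dominating function, and because the set of $u$ where the minimizer is not unique---the union of the hypersurfaces $\{\gamma_i+c(u,y_i)=\gamma_j+c(u,y_j)\}$, each a proper real-analytic subset of the sphere since $y_i\neq y_j$---has zero measure for the absolutely continuous $\rho$. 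This gives $\partial_{\gamma_i}\int_{\Sph^{d-1}}\bar c_\gamma\rho\,\dd u = \int_{\PI_Y^{\exp\gamma}(y_i)}\rho(u)\dd u = \rho(\PI_Y^{\exp\gamma}(y_i))$; together with the derivative $-\alpha_i$ of the linear term this yields exactly \eqref{eq:gradPhi}. The main obstacle is precisely the control of these moving boundaries: one must argue that they never carry positive $\rho$-mass, so that no boundary term survives the differentiation and so that $\gamma\mapsto\rho(\PI_Y^{\exp\gamma}(y_i))$ is continuous---this continuity is what upgrades $\Phi$ to a $\Class^1$ function.

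Finally I would invoke the fact that for a concave $\Class^1$ function on $\Rsp^N$, a point is a global maximizer if and only if its gradient vanishes there. Thus $\gamma$ is a global maximizer of $\Phi$ if and only if $\nabla\Phi(\gamma)=0$, i.e.\ $\rho(\PI_Y^{\exp\gamma}(y_i)) = \alpha_i$ for every $i$, which is exactly the far-field reflector condition \eqref{eq:FF} for $\lambda_i=\exp(\gamma_i)$. As a consistency check, $\Phi$ is invariant under $\gamma\mapsto\gamma+t(1,\hdots,1)$ because $\int\rho\dd u=\sum_i\alpha_i=1$, reflecting the invariance of the reflector problem under a common scaling of the focal distances; correspondingly $\nabla\Phi(\gamma)\cdot(1,\hdots,1)=\sum_i\big(\rho(\PI_Y^{\exp\gamma}(y_i))-\alpha_i\big)=0$ holds automatically.
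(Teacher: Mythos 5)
Your proof is correct, and its skeleton coincides with the paper's: both begin by rewriting $\Phi(\gamma)=\int_{\Sph^{d-1}}\big(\min_{1\leq i\leq N} c(u,y_i)+\gamma_i\big)\rho(u)\dd u-\sum_i\gamma_i\alpha_i$, and both end with the first-order characterization of maximizers of a concave $\Class^1$ function. Where you differ is in the analytic core. The paper never differentiates under the integral sign: writing $T_\gamma$ for the map sending $u$ to the direction $y_i$ of its cell, it integrates the pointwise inequality $\min_i\big(c(u,y_i)+\kappa_i\big)\leq c(u,T_\gamma(u))+\kappa_{T_\gamma(u)}$ for arbitrary $\kappa$ and subtracts the expression for $\Phi(\gamma)$, obtaining in one stroke $\Phi(\kappa)-\Phi(\gamma)\leq\sca{D\Phi(\gamma)}{\kappa-\gamma}$ with $D\Phi(\gamma)=(\rho(\PI_Y^{\exp\gamma}(y_i))-\alpha_i)_{1\leq i\leq N}$; this shows simultaneously that $\Phi$ is concave (nonempty supdifferential everywhere) and that $D\Phi(\gamma)$ is a supgradient, after which concavity plus continuity of $\gamma\mapsto D\Phi(\gamma)$ upgrades $\Phi$ to $\Class^1$ with $\nabla\Phi=D\Phi$ --- no dominated convergence or envelope argument needed. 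You instead get concavity from the min-of-affine structure (equivalent in content) and establish the gradient by differentiating under the integral, which obliges you to justify the interchange (the $1$-Lipschitz domination) and the almost-everywhere uniqueness of the minimizing index. That extra work is sound: your tie set $\{u:\gamma_i+c(u,y_i)=\gamma_j+c(u,y_j)\}$ is in fact exactly the intersection of $\Sph^{d-1}$ with the hyperplane $\sca{u}{e^{\gamma_j-\gamma_i}y_i-y_j}=e^{\gamma_j-\gamma_i}-1$, which is proper precisely because $y_i\neq y_j$, so it is $\rho$-null for absolutely continuous $\rho$. Your route has the virtue of making explicit a fact the paper uses silently: the continuity of $\gamma\mapsto\rho(\PI_Y^{\exp\gamma}(y_i))$, which \emph{both} proofs need for the $\Class^1$ conclusion, rests on these moving boundaries carrying no mass, and the paper asserts it without detail. (Minor quibble: your finiteness argument ``for each $u$ some $y_j$ is far from $u$'' is loose if the directions cluster; the clean statement is that $u\mapsto\min_j c(u,y_j)$ is continuous and real-valued on the compact sphere once the $y_j$ are distinct, or simply that the logarithmic singularity of $c(\cdot,y_j)$ is integrable.) In short, the paper's supgradient inequality is the cheaper path to the same gradient formula, while your differentiation argument is more hands-on and fills in the measure-theoretic details, including the consistency checks (null tie sets, finiteness, and the scaling invariance giving $\sum_i\partial_{\gamma_i}\Phi=0$) that the paper leaves implicit.
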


There is a similar formulation for a reflector defined by the union of
solid confocal paraboloids\,\cite{glimm2003optical}. However, there is
no known variational formulation for the reflector problem involving
intersection of ellipsoids. A numerical approach has been proposed in
\cite{oliker2006rigorous}.

We now turn to the proof of Theorem~\ref{th:OT}. This proof combines the
results of Section~5 of \cite{aurenhammer1998minkowski} with the
optimal transport formulation of the far-field reflector problem
\cite{wang2004design,glimm2003optical}.
Let us first recall some properties of supdifferential of concave functions.
Given a function $\Phi$ and $\lambda$ in $\Rsp^N$, the supdifferential
of $\Phi$ at $\lambda$, denoted $\partial^+ \Phi(\lambda)$ is the set of
vectors $v$ such that
\[ \forall \kappa \in \Rsp^N,~ \Phi(\kappa) \leq \Phi(\lambda) +
\sca{\kappa-\lambda}{v}.\] A function $\Phi$ is concave if and only if
for every $\lambda$, the supdifferential $\partial^+ \Phi(\lambda)$ is
nonempty. If this is the case, $\Phi$ is differentiable almost
everywhere, and at points of differentiability the supdifferential
$\partial^+ \Phi(\lambda)$ coincides with the singleton $\{\nabla
\Phi(\lambda)\}$. Finally, $\lambda$ is a global maximum of $\Phi$ if
and only if $\partial^+ \Phi (\lambda)$ contains the zero vector.

\begin{proof}[Proof of Theorem~\ref{th:OT}] We consider a vector $\gamma$ in $\Rsp^N$. First remark that $c(u,y_i) + \gamma_i =
\log(\exp(\gamma_i)/(1-\sca{u}{y_i})$. Therefore,
\begin{align*}
u \in \PI_Y^{\exp(\gamma)}(y_i)
&\Longleftrightarrow \forall j\in\{1,\hdots,N\},
\frac{\exp(\gamma_i)}{1 - \sca{u}{y_i}} \leq 
\frac{\exp(\gamma_j)}{1 - \sca{u}{y_j}} \\
&\Longleftrightarrow  \forall j\in\{1,\hdots,N\},
c(u,y_i) + \gamma_i \leq c(u,y_j) + \gamma_j
\end{align*}
Therefore, the function $\Phi$ can be reformulated as follows:
\[\Phi(\gamma) = \int_{\Sph^{d-1}} \left(\min_{1\leq i\leq N}
  c(u,y_i) + \gamma_i\right)\rho(u) \dd u - \sum_{i=1}^N \gamma_i
  \alpha_i.\]
We now define $T_\gamma$ as the function
that maps a point $u$ on the unit sphere to the point $y_i$ such that $u$ belongs to $\PI_Y^{\exp(\gamma)}(y_i)$. Then,
\begin{equation}
\Phi(\gamma) = \int_{\Sph^{d-1}} (c(u,T_\gamma(u)) +
  \gamma_{T_\gamma(u)})\rho(u) \dd u - \sum_{i=1}^N \gamma_i
  \alpha_i.
\label{eq:i}
\end{equation}
Moreover, for any $\kappa$ in $\Rsp^d$, one has
$\min_{1\leq i\leq N} c(u,y_i) + \kappa_i \leq c(u,T_\gamma(u)) +
  \kappa_{T_\gamma(u)}$.  Integrating this inequality, and substracting
  \eqref{eq:i}, we get

\begin{align}
\Phi(\kappa) - \Phi(\gamma)
&\leq \int_{\Sph^{d-1}} (\kappa_{T_\gamma(u)} - \gamma_{T_\gamma(u)})\rho(u)\dd u - \sum_{1\leq i\leq N} (\kappa_i - \gamma_i) \alpha_i \notag\\
&\leq \sum_{1\leq i\leq N} \left(\int_{\PI_Y^{\exp\gamma}(y_i)} \rho(u)\dd u - \alpha_i\right) (\kappa_i - \gamma_i) = \sca{D\Phi(\gamma)}{\kappa - \gamma} \notag\\
\hbox{where }
D\Phi(\gamma) &:=(\rho(\PI_Y^{\exp(\gamma)}(y_i)) - \alpha_i)_{1\leq i\leq
  N}.\label{eq:DPhi}
\end{align}
The above inequality shows that $D\Phi(\gamma)$ lies in
$\partial^+\Phi(\gamma)$, i.e., this set is never empty and the
function $\Phi$ is concave. Since the vector $D\Phi(\gamma)$ depends
continuously on $\gamma$, we deduce that the function $\Phi$ is
$\Class^1$ smooth, and that $\nabla \Phi(\gamma) = D\Phi(\gamma)$
everywhere. By Equation~\eqref{eq:DPhi}, a vector $\lambda :=
\exp(\gamma)$ solves the far-field reflector problem \eqref{eq:FF} if
and only if $D\Phi(\gamma)=0$, i.e., if and only if $\gamma$ is a
global maximizer of $\Phi$.
\end{proof}

\subsection{Implementation details.} The implementation of the maximization 
of the functional $\Phi$ follows closely \cite{merigot2011multiscale}.
We rely on a quasi-Newton method, which only requires  being able to
evaluate the value of $\Phi$ and the value of its gradient at any
point $\gamma$, as given by Equations~\eqref{eq:Phi}--\eqref{eq:gradPhi}.
The computations of these values are performed in two steps. First, we
compute the boundary of the paraboloid intersection cells
$\PI_Y^{\exp(\gamma)}(y_i)$, using the algorithm described in
Section~\ref{sec:computation}. These cells are then tessellated, and
the integrals in Equations~\eqref{eq:Phi}--\eqref{eq:gradPhi} are
evaluated numerically using a simple Gaussian quadrature.  In the
experiments illustrated in Figure \ref{fig:monge}, we constructed the
measure $\sum_{i} \alpha_i \delta_{y_i}$ so as to approximate a
picture of Gaspard Monge (projected on a part of the half-sphere
$\Sph^2_+ := \Sph^2\cap \{z\geq 0\}$).  The density $\rho$ is
constant in the half-sphere $\Sph^2_-$ and vanishes in the other half.
To the best of our knowledge, the only other numerical implementation
of this formulation of the far-field reflector problem has been
proposed in \cite{caffarelli1999numerical}. The authors
develop an algorithm, called {\textit{Supporting paraboloids}} 
which bears resemblance to Bertsekas' auction algorithm for the
assignment problem \cite{bertsekas1988dual}. They use it to solve the far-field reflector
problem with 19 paraboloids. Using the quasi-Newton approach presented
above, and the algorithm developed in Section~\ref{sec:computation},
we are able to solve this problem for 15,000 paraboloids in less than
10 minutes on a desktop computer. Note that the algorithm of
Section~\ref{sec:computation} would probably also allow a faster and
robust implementation of the \textit{Supporting paraboloids} algorithm \cite{caffarelli1999numerical}.

\section{Conclusion} \label{sec:conclusion} In addition to the open
problems mentioned earlier, let us mention a perspective.  In a recent
article \cite{kitagawa2012iterative}, the algorithm of supporting
paraboloids was extended to optimal transport problems involving a
cost function $c$ that satisfies the so-called Ma-Trudinger-Wang
regularity condition.  For this algorithm to be practical, one needs
to compute the generalized Voronoi cells efficiently, defined for
any function $\psi:Y\to\Rsp$ by
\[\Vor_c^\psi(y) = \{ x \in X;~\forall z \in Y,~c(x,y) + \psi(y) \leq c(x,z)+\psi(z)\}.\]
For general costs, and even in $2$D, one cannot hope to do this in
time below $\Omega(N^2)$. However, the MTW regularity condition ensures an
analog of Proposition~\ref{prop:lemma-2-1} and in particular, it
implies that these generalized Voronoi cells are connected.  One might
wonder whether a randomized iterative construction could be used in
this setting to yield a construction in expected time $O(N\log N)$ in
$2$D. This would open the way to practical algorithms for the
resolution of optimal transport problems that are intractable to
PDE-based methods.

\subsection{Acknowledgements.} The authors would like to thank
Dominique Attali, Olivier Devillers and Francis Lazarus for
interesting discussions. Olivier Devillers suggested the approach used
in the proof of the lower complexity bound for intersection of
ellipsoids. The first author is supported by grant
FACEPE/INRIA,APQ-0055-1.03/12. The second and third author would like
to acknowledge the support of the French Agence Nationale de la
Recherche (ANR) under reference ANR-11-BS01-014-01 (TOMMI) and
ANR-13-BS01-0008-03 (TOPDATA) respectively.

\bibliographystyle{amsplain}
\bibliography{ot-reflector}

\providecommand{\bysame}{\leavevmode\hbox to3em{\hrulefill}\thinspace}
\providecommand{\MR}{\relax\ifhmode\unskip\space\fi MR }
\providecommand{\MRhref}[2]{%
  \href{http://www.ams.org/mathscinet-getitem?mr=#1}{#2}
}
\providecommand{\href}[2]{#2}
\begin{thebibliography}{10}

\bibitem{cgal}
\emph{\textsc{Cgal}, {C}omputational {G}eometry {A}lgorithms {L}ibrary},
  http://www.cgal.org.

\bibitem{aurenhammer:1987}
F.~Aurenhammer, \emph{Power diagrams: properties, algorithms and applications},
  SIAM Journal of Computing \textbf{16} (1987), 78--96.

\bibitem{aurenhammer1998minkowski}
F.~Aurenhammer, F.~Hoffmann, and B.~Aronov, \emph{Minkowski-type theorems and
  least-squares clustering}, Algorithmica \textbf{20} (1998), no.~1, 61--76.

\bibitem{bertsekas1988dual}
D.~P. Bertsekas and J.~Eckstein, \emph{Dual coordinate step methods for linear
  network flow problems}, Mathematical Programming \textbf{42} (1988), no.~1,
  203--243.

\bibitem{Boissonnat:2003:CCE:644108.644159}
J.-D. Boissonnat and M.~I. Karavelas, \emph{On the combinatorial complexity of
  euclidean {V}oronoi cells and convex hulls of d-dimensional spheres},
  Proceedings of the fourteenth annual ACM-SIAM symposium on Discrete
  algorithms (Philadelphia, PA, USA), SODA '03, Society for Industrial and
  Applied Mathematics, 2003, pp.~305--312.

\bibitem{bronnimannetal:2001}
H.~Br\'{o}nnimann, C.~Burnikel, and S.~Pion, \emph{Interval arithmetic yields
  efficient dynamic filters for computational geometry}, Discrete Applied
  Mathematics \textbf{109} (2001), 25--47.

\bibitem{caffarelli2008regularity}
L.~A. Caffarelli, C.~E. Guti{\'e}rrez, and Q.~Huang, \emph{On the regularity of
  reflector antennas}, Annals of Mathematics-Second Series \textbf{167} (2008),
  no.~1, 299.

\bibitem{caffarelli1999numerical}
L.~A. Caffarelli, S.~Kochengin, and V.~I. Oliker, \emph{On the numerical
  solution of the problem of reflector design with given far-field scattering
  data}, Contemporary Mathematics \textbf{226} (1999), 13--32.

\bibitem{caffarelli2008weak}
L.~A. Caffarelli and V.~I. Oliker, \emph{Weak solutions of one inverse problem
  in geometric optics}, Journal of Mathematical Sciences \textbf{154} (2008),
  no.~1, 39--49.

\bibitem{Delage_cgal-basedfirst}
C.~Delage, \emph{\textsc{Cgal}-based first prototype implementation of moebius
  diagram in 2{D}}, 2008.

\bibitem{glimm2003optical}
T.~Glimm and V.~Oliker, \emph{Optical design of single reflector systems and
  the monge--kantorovich mass transfer problem}, Journal of Mathematical
  Sciences \textbf{117} (2003), no.~3, 4096--4108.

\bibitem{kettneratal:2008}
L.~Kettner, K.~Mehlhorn, S.~Schirra, and C.~K. Yap, \emph{Classroom examples of
  robustness problems in geometric computations}, Computational Geometry:
  Theory and Applications \textbf{40} (2008), 61--79.

\bibitem{kitagawa2012iterative}
Jun Kitagawa, \emph{An iterative scheme for solving the optimal transportation
  problem}, arXiv preprint arXiv:1208.5172 (2012).

\bibitem{kochengin1997determination}
S.~A. Kochengin and V.~I. Oliker, \emph{Determination of reflector surfaces
  from near-field scattering data}, Inverse Problems \textbf{13} (1997), no.~2,
  363.

\bibitem{merigot2011multiscale}
Q.~M{\'e}rigot, \emph{A multiscale approach to optimal transport}, Computer
  Graphics Forum, vol.~30, Wiley Online Library, 2011, pp.~1583--1592.

\bibitem{Mulmuley}
K.~Mulmuley (ed.), \emph{Computational geometry - an introduction through
  randomized algorithms.}, Prentice Hall, 1994.

\bibitem{oliker2003mathematical}
V.~I. Oliker, \emph{Mathematical aspects of design of beam shaping surfaces in
  geometrical optics}, Trends in Nonlinear Analysis (2003), 193--224.

\bibitem{oliker2006rigorous}
\bysame, \emph{A rigorous method for synthesis of offset shaped reflector
  antennas}, Computing Letters \textbf{2} (2006), no.~1-2, 1--2.

\bibitem{oliker2010characterization}
\bysame, \emph{A characterization of revolution quadrics by a system of partial
  differential equations}, Proceedings of the American Mathematical Society
  \textbf{138} (2010), no.~11, 4075--4080.

\bibitem{Sack:2000:HCG:337150}
J.-R. Sack and J.~Urrutia (eds.), \emph{Handbook of computational geometry},
  North-Holland Publishing Co., 2000.

\bibitem{wang2004design}
X.J. Wang, \emph{On the design of a reflector antenna ii}, Calculus of
  Variations and Partial Differential Equations \textbf{20} (2004), no.~3,
  329--341.

\end{thebibliography}

\end{document}